\documentclass[12pt,a4paper]{article}

\usepackage{amsmath,amssymb,amsthm,mathtools}
\usepackage{mathrsfs}
\usepackage{bm}
\usepackage{cite}
\usepackage{tikz}
\usetikzlibrary{arrows.meta,calc,decorations.pathmorphing}
\usepackage{geometry}
\usepackage[colorlinks=true,linkcolor=blue,citecolor=blue,urlcolor=blue]{hyperref}
\newcommand{\Tr}{\operatorname{Tr}}

\newcommand{\Ad}{\operatorname{Ad}}
\newcommand{\Id}{\operatorname{Id}}
\newcommand{\Iplus}{\mathscr I^+}
\newcommand{\Hplus}{\mathcal H^+}
\newcommand{\A}{\mathcal A}

\newcommand{\E}{\mathcal E}
\newcommand{\I}{\mathcal I}
\newcommand{\N}{\mathcal N}

\newcommand{\F}{\mathcal F}
\newcommand{\K}{\mathsf K}
\newcommand{\Hil}{\mathcal H}

\newcommand{\hard}{\mathrm{hard}}
\newcommand{\tot}{\mathrm{tot}}
\newcommand{\ch}{\mathrm{ch}}
\newcommand{\matter}{\mathrm{matter}}
\newcommand{\Coul}{\mathrm{Coul}}
\newcommand{\open}{\mathrm{open}}
\newcommand{\close}{\mathrm{close}}
\newcommand{\opt}{\mathrm{opt}}
\newcommand{\admissible}{\mathrm{admissible}}
\newcommand{\out}{\mathrm{out}}
\newcommand{\late}{\mathrm{late}}
\newcommand{\lab}{\mathrm{lab}}
\newcommand{\env}{\mathrm{env}}
\newcommand{\inm}{\mathrm{in}}
\newcommand{\rad}{\mathrm{rad}}

\newcommand{\hor}{\mathrm{H}}
\newcommand{\unres}{\mathrm{unres}}
\newcommand{\dress}{\mathrm{dress}}

\newcommand{\dd}{\mathrm d}
\newcommand{\ii}{\mathrm i}

\newcommand{\half}{\frac12}

\theoremstyle{plain}
\newtheorem{proposition}{Proposition}
\newtheorem{theorem}{Theorem}
\theoremstyle{remark}
\newtheorem{remark}{Remark}
\newtheorem{definition}{Definition}
\newtheorem{problem}{Problem}

\title{\textbf{Horizon-Restricted Leading Soft QED \\
as \\
Open Quantum System}}
\author{\bf Soo-Jong Rey\\[0.2cm]
{\sl Kwangwoon University}\\
{\sl Seoul, Korea}}
\date{\tt sjrey@kw.ac.kr}

\begin{document}
\maketitle

\begin{abstract}
I formulate black-hole-horizon-induced decoherence of charged branch codes as the leading-soft QED restricted to an exterior algebra, formulated as an open quantum system. The fixed-history Feynman--Vernon identity $\F[J,J]=1$ remains exact.  Decoherence enters through the unequal-history influence factor that survives exterior monitoring and belongs to the complementary horizon output.  In the coherent eikonal regime, I derive the completely positive Schur channel $({\cal E}_H^{(0)}\rho)_{ab}=\langle\Phi_b^{H,(0)}|\Phi_a^{H,(0)}\rangle \, \rho_{ab}$. 
The leading soft input is the eikonal factor, projected onto the horizon radiative algebra.  The channel yields Gram-positivity constraints, an exterior quantum-eraser bound, finite-time non-Markovianity tests, soft/hard scaling criteria, and a charged-qutrit interferometer measuring a leading-soft Bargmann holonomy.  The holonomy phase is the rephasing-invariant symplectic area of a triangle in horizon soft phase space. I show that its orientation, common-mode, triangulation, and completely positive determinant identities render falsifiable tests beyond pairwise two-path visibility.
\end{abstract}

\section{Introduction}

Long-range gauge fields necessitate a rigorous separation between the constrained Coulomb field and independent radiation~\cite{BlochNordsieck,YFS,Kinoshita,LeeNauenberg,Chung,KulishFaddeev}. In this work, I employ this separation to formulate black-hole-horizon-induced decoherence as a leading-soft~\cite{WeinbergSoft} open-system channel~\cite{Rey2026a,Rey2026b}. A dressed charged branch carries the Coulomb field required by Gauss's law; the environment consists exclusively of soft radiative records that lie outside the algebra controlled by the exterior experimenter. In flat spacetime, an adiabatic protocol drives the independent radiative record to vanish upon recombination. Near a horizon, however, this identical branch-dependent leading-soft record bifurcates: exterior modes propagate to \(\Iplus\), whereas a complementary component crosses \(\Hplus\) and remains inaccessible to exterior operations.

Existing analyses of horizon decoherence have established the physical effect for charged and massive bodies near black holes and Killing horizons, connected it to local low-frequency two-point functions, and identified the finite-time recovery problem~\cite{UnruhFalse,DSW2022,DSW2023,DSWLocal2025,DKSW2025,DSSoftInfo2025}. (For recent follow-up works, see ~\cite{GrallaWei,LiRN,WilsonGerow,BiggsMaldacena}). I reformulate these elements within the framework of leading-soft inclusive QED and open quantum systems, extending the work of~\cite{Rey2026a,Rey2026b}. The central object of study is a reduced channel specified by an accessible exterior algebra, detector resolution, and a complementary output; it is not merely a coordinate-frequency designation for a horizon detector.

I initiate the construction by assuming a tensor product between the dressed charged branch sector and the independent radiation field,
\begin{equation}
       \rho_{\rm tot}^{\rm in}
       =
       \rho_{\rm br}^{\rm in}\otimes\rho_{\rm rad}^{\rm in},
       \label{eq:intro_product_revised}
\end{equation}
while the branch density matrix itself remains entirely arbitrary. A diagonal branch input generates classical emission histories, whereas a coherent branch input probes the off-diagonal entries of the identical channel. Consequently, the familiar two-branch spatial superposition functions merely as a basis choice for extracting a single channel coefficient, rather than as a fundamental structural assumption. The normalized Feynman--Vernon influence functional obeys the fixed-history identity~\cite{Rey2026a,Rey2026b}
\begin{equation}
       \F[J,J]=1,
       \label{eq:intro_fixed_history_revised}
\end{equation}
and the black-hole effect manifests as the residual unequal-history factor $\F[J_a,J_b]$, which exterior operations are unable to erase.

Within the coherent-state regime, the horizon component of this residual factor assumes the form of a Gram matrix:
\begin{equation}
       G^{H,(0)}_{ab}
       =
       \langle\Phi_b^{H,(0)}|\Phi_a^{H,(0)}\rangle,
       \qquad
       (\E_H^{(0)}\rho)_{ab}=G^{H,(0)}_{ab}\rho_{ab} .
       \label{eq:intro_schur_revised}
\end{equation}
Thus, the horizon contribution acts as a completely positive Schur multiplier on any finite branch code. This channel formulation enforces Gram positivity, constrains exterior quantum erasure, and unifies the horizon coherent-state description with the local two-point-function representation. The identical kernel manifests equivalently as a horizon overlap, a one-particle displacement norm, or a noise kernel smeared with current differences.

This quantum channel perspective yields concrete operational predictions. Exterior monitoring eliminates accessible infrared records yet preserves a horizon-fidelity ceiling. Finite-time protocols exhibit non-completely-positive-divisible recovery when subsequent current motion cancels earlier soft displacements. Soft horizon records and ordinary hard absorption are distinguished by their dwell-time and switching-time scaling. Material systems mimic the electromagnetic channel precisely when their low-frequency kernels coincide when evaluated on the branch-current differences under consideration. The most discriminating observable emerging from this leading-soft construction possesses a multi-branch structure: a three-arm charged interferometer reconstructs the Bargmann product~\cite{Bargmann,Uhlmann}
\begin{equation}
       \mathcal B_{123}=G_{12}G_{23}G_{31},
       \label{eq:intro_bargmann_revised}
\end{equation}
from qutrit tomography as $\rho_{12}\rho_{23}\rho_{31}/(\rho_{11}\rho_{22}\rho_{33})$.   Following ideal exterior erasure, this cyclic visibility isolates the leading horizon Bargmann invariant within the factorized regime. Its phase corresponds to a symplectic area in the horizon soft phase space and provides orientation, common-mode, triangulation, and complete-positivity determinant tests, which remain inaccessible in a simpler pairwise visibility analysis.

The remainder of this paper is organized as follows. I first establish the horizon-decoherence constraints for a black hole and construct the dressed branch space, the exterior algebra, the initial product-state channel, and the Feynman--Vernon kernel. I then connect this construction to the flat-space leading-soft QED open quantum system channel and project the eikonal soft factor onto the horizon algebra. Subsequent sections discuss the local-noise representation, exterior instruments, recovery bounds, finite-time control, soft/hard scaling, and material mimicry as consequences of a unified reduced channel. Finally, I develop falsifiable predictions derived from the charged-qutrit Bargmann interferometer. Technical details are collected in the appendices. Appendix~\ref{app:asrd} formulates the Hayden--Harlow exact-completion issue rigorously as an asymptotic soft-record decoding problem, while Appendix~\ref{app:scope_checks} details consistency checks. Additional appendices cover the coherent-state overlap, the Schur-channel Kraus representation, and the relation to infrared-finite QED.

\section{Diagnostics of horizon-decoherence constraints}

I first establish the constraints implemented by the leading-soft channel construction. In flat spacetime, the late-time electromagnetic field separates into a constrained Coulomb component, which follows the final charge and belongs to the dressed charged state, and an independent radiative component at $\Iplus$.  A two-branch probe state assumes the asymptotic form:
\begin{equation}
       |\Psi_{\tot}\rangle
       =
       {\frac{1}{\sqrt2}}
       \left(
          |1\rangle_{\ch}\otimes|\Psi_1\rangle_{\Iplus}
          +
          |2\rangle_{\ch}\otimes|\Psi_2\rangle_{\Iplus}
       \right),
\end{equation}
so the off-diagonal charged coherence is modulated by the overlap $\langle\Psi_2|\Psi_1\rangle_{\Iplus}$.   Adiabatic switching on and off drives this overlap toward unity in flat spacetime. This recovery baseline eliminates spurious Coulomb distinguishability from the analysis and identifies the independent radiation as the relevant environment.

A stationary black hole introduces a second asymptotic radiative output at $\Hplus$.  For a given branch $a$, the late state assumes the schematic form
\begin{equation}
       |a\rangle_{\ch}\otimes
       |\Psi_a\rangle_{\Iplus}\otimes
       |\Phi_a\rangle_{\Hplus} .
\end{equation}
The exterior adiabatic protocol suppresses radiation to infinity while preserving a nontrivial horizon overlap.  Within the coherent-state approximation,
\begin{equation}
       \left|\langle\Phi_b|\Phi_a\rangle_{\Hplus}\right|
       =
       \exp\left[-\half\|\alpha_a^{\hor}-\alpha_b^{\hor}\|^2\right],
       \label{eq:coherent_overlap_abs}
\end{equation}
where $\alpha_a^{\hor}$ denotes the horizon one-particle wavefunction generated by the branch-dependent retarded solution. The squared norm corresponds to the expected number of entangling horizon photons. The electromagnetic Schwarzschild estimate for a representative two-branch protocol yields
\begin{equation}
       \langle N\rangle_{\hor}
       \sim
       \frac{G^3M^3 q^2 d^2}{\hbar c^6\epsilon_0 b^6}\,T,
       \label{eq:schwarzschild_scaling_restored}
\end{equation}
up to order-unity constants and branch-protocol normalization factors.  Here, $M$ denotes the black-hole mass, $q$ the charge, $d$ the branch separation, $b$ the characteristic distance of the laboratory from the back hole, and $T$ the hold time.  The gravitational analogue replaces the electromagnetic dipole or current source with the corresponding quadrupolar source. This scaling determines the physical regime accurately reproduced by the channel.

Furthermore, examples involving Killing horizons establish the appropriate observer-independent terminology. The designation ``soft'' is contingent upon the chosen flow and detector algebra: a mode classified as soft with respect to a horizon generator may be classified as hard with respect to inertial time. Consequently, the channel formalism employs the distinguishability induced on a specified algebra, in conjunction with its detector response, rather than relying on an invariant coordinate-frequency classification.

The local-noise formulation yields an identical positive kernel when expressed in laboratory variables. In QED, the decoherence functional is evaluated using the two-point function within the local laboratory frame. Let the difference in conserved currents be defined as
\begin{equation}
       f^\mu(x)=J_a^\mu(x)-J_b^\mu(x).
\end{equation}
Then the entangling photon number is
\begin{equation}
       \langle N\rangle_{ab}
       =
       \int \dd^4x\,\dd^4x'\,
       f_\mu(x)\,
       N^{\mu\nu}(x,x')\,
       f_\nu(x'),
       \label{eq:local_noise_general}
\end{equation}
where $N^{\mu\nu}$ denotes the symmetrized electromagnetic two-point kernel, with the gauge-dependent longitudinal sector eliminated either by current conservation or via gauge-invariant field-strength smearing. Consequently, the horizon representation and the local representation correspond to the identical kernel formulated on distinct algebras.

The consideration of finite-time recovery introduces a rigorous operational framework.  A horizon cut defines a restricted radiation state, whereas the final visibility depends on both the complete source protocol and the recovery map optimized over the accessible exterior output.  Quantum channel theory provides the requisite formal object: the reduced charged state emerges as the output of a map characterized by accessible and complementary radiation records.

The initial branch state serves as a probe for this map.  A two-branch spatial superposition extracts a single off-diagonal channel coefficient; it does not, however, define the channel in its entirety. In the soft-QED derivation, the assumption of an initial product state decouples the dressed charged branch sector from the independent radiation field, allowing the initial branch state $\rho_{\rm br}^{\rm in}$ to be diagonal, mixed, or coherent. A diagonal branch mixture undergoes radiation but lacks the initial coherence required for interference. Conversely, a coherent branch input reveals the off-diagonal action of the identical completely positive channel.

\section{Dressed charged branches and exterior algebra}

Consider a controlled charged system possessing an orthogonal branch basis $\{|a\rangle\}_{a\in\mathcal S}$, wherein each branch corresponds semiclassically to a conserved current $J_a^\mu(x)$.  The branch label denotes either the two arms of a Stern--Gerlach apparatus or a finite set of wavepackets within a larger interferometer. At leading order, the current is treated as classical; quantum fluctuations of internal degrees of freedom are incorporated by replacing $J_a$ with branch-conditioned current operators.

A physical charged branch does not correspond to a bare matter state; rather, it is a gauge-invariant dressed state given by
\begin{equation}
       |a\rangle_{\dress}
       =
       W_{\Coul}[J_a]\,|a\rangle_{\matter}\otimes|0\rangle_{\rad},
       \label{eq:dressed_branch}
\end{equation}
where $W_{\Coul}[J_a]$ denotes a Coulombic dressing determined by Gauss's law and the chosen asymptotic dressing convention. While the exact dressing is not unique, this nonuniqueness is physically significant only insofar as it modifies the radiative soft content. Pure Coulomb distinguishability does not constitute genuine decoherence, representing the QED analogue of the distinction between true and spurious decoherence.

In the exterior region of a black hole, the late-time radiation algebra comprises at least two relevant components,
\begin{equation}
       \A_{\rad}^{\late}
       \simeq
       \A_{\Iplus}\vee\A_{\Hplus},
       \label{eq:rad_split}
\end{equation}
subject to standard gauge constraints, edge-mode subtleties, and dressing conventions.  The symbol $\vee$ denotes the algebra generated by the two commuting asymptotic subalgebras.  The exterior observer measures a subalgebra of $\A_{\Iplus}$, leaves the remaining exterior output unresolved, and employs classical feed-forward in a recovery protocol.  Crucially, the observer lacks direct access to $\A_{\Hplus}$ via exterior operations.

Consequently, the accessible algebra does not merely encompass ``all photons exterior to the horizon'' in a naive spacetime sense. Rather, it is a specified detector algebra,
\begin{equation}
       \A_{\det}\subseteq\A_{\Iplus}\vee\A_{\lab},
       \label{eq:det_alg}
\end{equation}
characterized by finite temporal, angular, frequency, and polarization resolutions. The complement relevant to decoherence is therefore not simply ``the black hole,'' but instead
\begin{equation}
       \A_{\env}
       =
       \A_{\Hplus}\vee
       \A_{\Iplus\setminus\det}\vee
       \A_{\hard,\unres}\vee\cdots .
       \label{eq:env_alg}
\end{equation}
Only upon the specification of this algebraic bipartition does the discussion of decoherence become operationally meaningful.

Formulating the horizon sector as a quantum channel algebra also bridges the present soft construction with earlier information-theoretic perspectives on black holes. Bekenstein's arguments concerning entropy and information flow, and in particular the observation by Bekenstein and Mayo that black holes behave in several respects as one-dimensional communication channels, are not utilized here to derive the soft-photon Gram factor microscopically~\cite{BekensteinInfoReview,BekensteinEntropyFlow,BekensteinMayo}. Nevertheless, they provide the appropriate operational intuition. The horizon is not modeled as an opaque measuring apparatus; rather, within the exterior problem, it constitutes a unidirectional complementary output algebra, resolved here exclusively in its soft radiative sector. The microscopic formulation thus yields a more precise account than the thermodynamic analogy: for fixed branch currents, the horizon soft channel generates a Gram matrix of branch-conditioned records, and this Gram matrix acts as a Schur multiplier on the charged-branch coherences.

The following Figure 1 and 2 serve solely as schematic aids for this algebraic bipartition. They are included in this section, rather than in the introduction, because their function is strictly technical: to delineate the accessible exterior record from the complementary horizon output.

\begin{figure}[t]
\centering
\begin{tikzpicture}[scale=1.2, every node/.style={font=\small}, >=Stealth]
  \coordinate (H0) at (0.2,0.5);
  \coordinate (Ip) at (2.8,5.4);
  \coordinate (I0) at (5.4,0.5);

  \fill[gray!12] (-0.65,0.5) -- (H0) -- (Ip) -- (1.55,5.4) -- cycle;
  \node[rotate=62,gray!100] at (0.75,3.25) {black-hole interior};

  \draw[thick] (H0) -- (Ip) node[pos=0.45,above left,sloped] {$\Hplus$};
  \draw[thick] (I0) -- (Ip) node[pos=0.4,above right,sloped] {$\Iplus$};
  \draw[thick] (H0) .. controls (1.75,0.02) and (3.85,0.02) .. (I0)
       node[pos=0.52,below] {$\Sigma_{\rm in}$};
  \node[above] at (Ip) {$i^+$};
  \node[below left] at (H0) {$\mathcal B$};
  \node[below right] at (I0) {$i^0$};

  \draw[densely dotted,rounded corners] (2.55,1.18) rectangle (3.65,4.35);
  \node[above] at (3.10,4.2) {Alice lab};
  \draw[very thick] (2.86,1.30) .. controls (2.62,2.15) and (2.62,3.05) .. (2.93,4.10)
       node[pos=0.6,left] {$J_a$};
  \draw[very thick,dashed] (3.34,1.30) .. controls (3.62,2.15) and (3.62,3.05) .. (3.27,4.10)
       node[pos=0.45,right] {$J_b$};

  \draw[->,decorate,decoration={snake,amplitude=0.65mm,segment length=3.0mm}]
       (3.53,2.6) .. controls (4.18,2.95) and (4.55,3.65) .. (4.43,4.22);
  \node[align=center,right] at (4.6,4.12) {$\Phi^{\Iplus}_{a,b}$\\exterior records};

  \draw[->,decorate,decoration={snake,amplitude=0.65mm,segment length=3.0mm}]
       (2.78,2.32) .. controls (2.00,2.75) and (1.45,3.22) .. (1.16,3.42);
  \draw[->,decorate,decoration={snake,amplitude=0.65mm,segment length=3.0mm}]
       (2.78,3.18) .. controls (2.23,3.50) and (1.82,3.95) .. (1.54,4.14);
  \node[align=center,left] at (0.75,4.1) {$\Phi^{\hor}_{a,b}$\\horizon records};

  \node[draw,rounded corners,align=center,fill=white] at (7.5,2.45)
      {$\A_{\det}\subset\A_{\Iplus}$\\measure / ignore / condition};
  \node[draw,rounded corners,align=center,fill=white] at (3.1,0.72)
      {$\rho^{\rm in}_{\rm br}\otimes\rho^{\rm in}_{\rm rad}$};
  \node[draw,rounded corners,align=center,fill=white] at (3.45,6.18)
      {$\E_{\rm out}=\Tr_{\Hplus}\Tr_{\Iplus_{\rm unres}} \Ad_{S_{\rm dress}}$};

  \draw[->] (3.45,5.92) -- (3.18,5.03);
  \draw[->] (5.48,2.92) -- (4.58,3.52);
  \node[align=center,gray!110] at (-0.45,1.25) {inaccessible\\complement};
\end{tikzpicture}
\caption{Penrose-diagram illustrating the horizon-restricted open-system channel.  The branch currents $J_a$ and $J_b$ are localized within Alice's exterior laboratory and act as sources for branch-dependent radiative records.  Records reaching $\Iplus$ constitutes, subject to detector resolution, the exterior measurement algebra; these records are subsequently measured, disregarded, or employed in a feed-forward recovery protocol. Conversely, radiation traversing $\Hplus$ correspond to the complementary channel relative to the exterior observer, which modulates the off-diagonal element of the charged sector by the horizon Gram factor $G^{\hor}_{ab}=\langle\Phi_b^{\hor}|\Phi_a^{\hor}\rangle$. Consequently, the diagram does not depict a physical detector array distributed throughout spacetime; rather, it provides a conformal representation of the algebraic bipartition that renders standard exterior Bloch--Nordsieck inclusivity incomplete.}
\label{fig:penrose_channel}
\end{figure}

\begin{figure}[t]
\centering
\begin{tikzpicture}[scale=0.90, every node/.style={font=\small}, >=Stealth]
  \coordinate (UL) at (-2.35,2.35);
  \coordinate (UR) at ( 2.35,2.35);
  \coordinate (R)  at ( 4.75,0.00);
  \coordinate (DR) at ( 2.35,-2.35);
  \coordinate (DL) at (-2.35,-2.35);
  \coordinate (L)  at (-4.75,0.00);
  \coordinate (O)  at (0,0);

  \fill[blue!7] (O)--(UR)--(R)--(DR)--cycle;

  \draw[thick] (UL)--(UR)--(R)--(DR)--(DL)--(L)--cycle;

  \draw[very thick] (UL)--(UR);
  \draw[very thick] (DL)--(DR);
  \node at (0,2.68) {$r=0$};
  \node at (0,-2.68) {$r=0$};

  \draw[thick] (O)--(UR);
  \draw[thick] (O)--(UL);
  \draw[thick] (O)--(DR);
  \draw[thick] (O)--(DL);

  \node[above right] at (3.7,0.6) {$\mathscr I^+_R$};
  \node[right] at (5.00,0.00) {right exterior};
  \node[rotate=45] at (0.5,0.9) {$\mathcal H^+_R$};

  \draw[densely dotted,rounded corners] (1.62,-0.42) rectangle (3.08,1.58);
  \draw[very thick] (2.06,-0.02) .. controls (1.90,0.48) and (1.98,0.98) .. (2.15,1.36);
  \draw[very thick,dashed] (2.50,-0.02) .. controls (2.70,0.48) and (2.66,0.98) .. (2.48,1.36);
  \node at (2.35,1.65) {Alice};

  \draw[->,decorate,decoration={snake,amplitude=0.50mm,segment length=2.7mm}]
      (2.66,0.78) .. controls (3.18,1.25) and (3.62,1.52) .. (4.0,1.86);
  \node[right] at (3.9,2.05) {$\Phi^{\mathscr I}_{a,b}$};

  \draw[->,decorate,decoration={snake,amplitude=0.50mm,segment length=2.7mm}]
      (2.04,0.84) .. controls (1.50,1.10) and (1.02,1.36) .. (0.66,1.70);
  \node[left] at (0.65,1.85) {$\Phi^{H}_{a,b}$};
\end{tikzpicture}
\caption{Planar $(r,t)$ Penrose diagram of the fully extended eternal Schwarzschild geometry, drawn in the conventional hexagon form.  The thick upper and lower horizontal segments represent the spacelike singularities $r=0$.  The diagonal line labelled $\mathcal H^+_R$ is the future event horizon bounding the right exterior region occupied by Alice.  Branch-dependent radiative records reaching $\mathscr I^+_R$ are, up to detector resolution, accessible exterior records, whereas branch-dependent records crossing $\mathcal H^+_R$ are assigned to the complementary horizon channel.  The left exterior, the white-hole region, and most other labels are omitted because they are spectators for the open-system construction.}
\label{fig:eternal_planar_penrose}
\end{figure}

\section{Factorized initial state and branch coherence} \label{sec:product_branch_coherence}

The preceding section established the physical branch basis and the algebraic bipartition. I now isolate a conceptual subtlety that is often obscured by standard notation but is crucial for connecting ordinary soft-QED scattering to the horizon-decoherence interferometer. The product initial state employed in the soft-QED open quantum system (OQS) construction does not assert that the local charged system is initialized in a single classical configuration. Rather, it asserts that, once the physical dressing convention has been fixed, the independent radiative degrees of freedom are initially unentangled with the controlled charged branch space:
\begin{equation}
       \rho_{\tot}^{\inm}
       =
       \rho_{\rm br}^{\inm}\otimes\rho_{\rad}^{\inm} .
       \label{eq:general_product_initiality}
\end{equation}
The branch density matrix is otherwise arbitrary,
\begin{equation}
       \rho_{\rm br}^{\inm}
       =
       \sum_{a,b\in\mathcal S}\rho^{\inm}_{ab}|a\rangle\langle b| .
       \label{eq:arbitrary_branch_density}
\end{equation}
A diagonal \(\rho_{\rm br}^{\inm}\) corresponds to an incoherent preparation, which is sufficient for computing inclusive emission probabilities or for identifying the branch-conditioned radiation produced by a classical current. However, such a state cannot exhibit interference loss, as no off-diagonal coherences are initially present. Conversely, an interferometric experiment requires a coherent branch input, such as
\begin{equation}
       |\psi\rangle
       =c_1|1\rangle+c_2|2\rangle,
       \qquad
       \rho_{12}^{\inm}=c_1c_2^* ,
       \label{eq:two_branch_probe_input}
\end{equation}
wherein the output visibility is determined by the coefficient multiplying \(\rho_{12}^{\inm}\).

This distinction establishes the logical structure of the analysis. A single-branch product input defines a single column of the Stinespring isometry, as it determines the radiation state generated by the current \(J_a\).  Linearity then dictates the action on any coherent superposition of branches. Consequently, the specific initial state is incidental to the structural derivation: it merely represents a choice of branch basis and input vector used to characterize an already-defined channel. The product-state scattering setup and the interferometric superposition setup are not mutually exclusive assumptions; rather, they address distinct questions. The former derives the channel, while the latter probes its off-diagonal action.

In the coherent-source approximation, the branch-controlled isometry takes the form
\begin{equation}
       V:
       |a\rangle\otimes|\Omega\rangle
       \longmapsto
       |a\rangle_{\dress}^{\out}
       \otimes
       |\Gamma_a^{\Iplus}\rangle
       \otimes
       |\Gamma_a^{\hor}\rangle .
       \label{eq:branch_isometry_product_section}
\end{equation}
Applying \(V\) to Eq.~\eqref{eq:arbitrary_branch_density} and tracing out the environmental degrees of freedom induces a Schur multiplier action on the matrix elements. Therefore, the branch superposition does not constitute an additional dynamical postulate; rather, it is the input state upon which the previously derived open-system map becomes experimentally observable. In the language of quantum information theory, it represents the choice of a code basis and a probe state, rather than a novel dynamical mechanism. The channel formalism is thus strictly more general than the two-branch overlap framework, as it encompasses classical emission, incoherent mixtures, two-arm interferometers, and multi-branch coherent code subspaces within a single unified expression.
A crucial qualification concerns the tensor factorization. Equation~\eqref{eq:general_product_initiality} is defined only after gauge dressing, rather than with respect to bare charged matter and a bare photon Fock vacuum, as Gauss's law necessitates such a dressing. The precise formulation requires product initiality between the dressed controlled branch sector and the independent radiative sector. Failure to observe this distinction would reintroduce spurious Coulomb decoherence as an artifact of an unphysical tensor factorization. With this qualification, the connection to the prior soft-QED scattering analysis is direct: the product state serves as the natural input for constructing the channel map \(\E\), whereas the black-hole application permits \(\rho_{\rm br}^{\inm}\) to possess arbitrary coherences, subsequently analyzing how \(\E\) suppresses them.

Finally, a remark regarding operational meaning is warranted.
If \(\rho_{ab}^{\inm}=0\), no experiment can exhibit decoherence in the \((a,b)\) sector, as no such coherence is initially present.  Nevertheless the channel coefficient \(G_{ab}\) remains well-defined and can be extracted by preparing a suitable coherent input or by reconstructing the process tensor. Consequently, the standard two-branch spatial superposition serves merely as an operational probe of a channel coefficient, rather than a structural assumption or an alternative dynamical framework.

\section{From the flat-space soft-QED channel to a horizon output}

The soft-QED OQS analysis of \cite{Rey2026a,Rey2026b} provides the foundation employed in this work; however, it should be adopted with modification.  In flat spacetime, or within a finite cavity serving as an infrared regulator, the hard-sector channel has a simple eikonal form: populations in the hard basis are conserved, whereas coherences are multiplied by an influence factor,
\begin{equation}
       \rho_{pp'}(t)
       =
       \rho_{pp'}(0)\exp[-\Gamma_R(p,p',t)-\ii\Phi_R(p,p',t)] .
       \label{eq:paperA_channel_template}
\end{equation}
The central input is the bath spectral density, which, in cavity notation, can be expressed schematically as
\begin{equation}
       \Gamma_R(p,p',t)
       =
       \frac{1}{\pi}\int_0^\infty d\omega\,
       \frac{J_R(\omega)}{\omega^2}
       (1-\cos\omega t)\,\Delta\mathcal J_{pp'}(\omega),
       \label{eq:paperA_spectral_template}
\end{equation}
wherein the identical kernel governs the Sudakov factor, hard-sector decoherence, purity loss, and the asymptotic approach to a dephasing channel.  The diagonal identity \(\Gamma_R(p,p,t)=0\) constitutes the OQS manifestation of inclusive probability conservation; the off-diagonal exponent corresponds to the cloud-distinguishability functional. A finite cavity size introduces a gap in the bath spectrum and permits recurrences, whereas the infinite-volume or long-observation limit restores the logarithmic infrared memory. These features constitute the precise conceptual content from the flat-space OQS problem that is retained in the present analysis.

The black-hole scenario is not derived merely by substituting the cavity radius with a horizon radius, nor by transplanting the unparticle interpretation of the electron. Rather, the inheritance is structural rather than literal.  The hard momentum labels \(p,p'\) are superseded by branch currents \(J_a,J_b\); the scalar spectral density is generalized to a positive noise kernel acting on current differences; and the infrared regulator is replaced by a detector algebra coupled with a complementary horizon output. Consequently, the relevant exponent assumes the form
\begin{equation}
       \Gamma^{\rm comp}_{ab}
       =
       \frac12\int \dd^4x\,\dd^4x'\,
       f_{ab,\mu}(x)\,
       N_{\rm comp}^{\mu\nu}(x,x')\,
       f_{ab,\nu}(x'),
       \qquad
       f_{ab}=J_a-J_b,
       \label{eq:comp_kernel_from_paperA}
\end{equation}
where \(N_{\rm comp}\) denotes the component of the radiative noise kernel assigned to inaccessible or unconditioned outputs, encompassing the horizon sector and any unresolved exterior sector. Within a factorized coherent regime, \(N_{\rm comp}\) contains a horizon block, and Eq.~\eqref{eq:comp_kernel_from_paperA} reduces to the horizon Gram factor utilized below.  In the presence of cross correlations, the same equation is interpreted as a complementary-channel kernel rather than a product of independent \(\Iplus\) and \(\Hplus\) factors.

This correspondence yields two essential safeguards. First, the fixed-history Feynman--Vernon normalization remains identical to its formulation in the flat-space channel: \(\F[J,J]=1\).  The horizon does not induce a diagonal infrared anomaly. Second, the source of decoherence is once again the off-diagonal distinguishability of environmental records, now refined by the algebraic distinction that certain records are exterior and controllable, whereas horizon records constitute complementary outputs for the semiclassical exterior observer. Consequently, the novel aspect of the black-hole scenario is not the mere existence of soft-induced dephasing. Rather, it is the causal and algebraic bipartition of the infrared channel into accessible and complementary outputs, coupled with the multi-branch Gram geometry that emerges from treating this bipartition as a completely positive map.

\section{Leading soft input and horizon projection}\label{sec:leading_soft_input}

The preceding analysis establishes the channel structure inherited from the flat-space soft-QED OQS framework. I now specify the soft-theorem input employed in this work, restricting our attention exclusively to the leading, eikonal photon theorem. For a soft photon with momentum $q^\mu=\omega \hat q^\mu$, helicity $\lambda$, and $\omega\to0$, the leading factor multiplying a hard charged process is~\cite{WeinbergSoft}
\begin{equation}
       S^{(0)}_\lambda(q)
       =
       e\sum_i\eta_i Q_i
       \frac{p_i\cdot\varepsilon_\lambda(q)}{p_i\cdot q} .
       \label{eq:leading_soft_factor}
\end{equation}
Here $Q_i$ denotes the charge in units of $e$, $\eta_i=+1$ for outgoing charged legs and $\eta_i=-1$ for incoming charged legs, and the polarization vector is transverse.  Equation~\eqref{eq:leading_soft_factor} represents the amplitude-level manifestation of the radiative displacement generated by a conserved eikonal current. The angular-momentum operator $J_i^{\mu\nu}$ does not enter the present analysis; its inclusion pertains to the subleading Low--Burnett--Kroll sector, which is reserved for a subsequent study.

Let $\K_\Omega$ denote the background- and state-dependent map from a conserved branch current to the physical one-particle radiative displacement in the chosen semiclassical state $\Omega$.  In flat spacetime, this map corresponds to the standard on-shell current-to-photon map underlying the leading soft theorem; in the black-hole exterior, it additionally incorporates propagation through the curved geometry and the specification of the out-algebra. For a given branch  $a$, I define
\begin{equation}
       \alpha_a^{(0)}=\K_\Omega J_a,
       \qquad
       \alpha_a^{H,(0)}=P_H\K_\Omega J_a,
       \label{eq:leading_displacement_projection}
\end{equation}
where $P_H$ denotes the projection of the radiative displacement onto the horizon output algebra. Consequently, the branch-pair data entering the exterior channel is given by
\begin{equation}
       \Delta\alpha_{ab}^{H,(0)}
       =
       \alpha_a^{H,(0)}-\alpha_b^{H,(0)}
       =
       P_H\K_\Omega(J_a-J_b).
       \label{eq:leading_delta_alpha_projection}
\end{equation}
The leading-order horizon Gram matrix element is given by 
\begin{equation}
       G_{ab}^{H,(0)}
       =
       \exp\left[
          -\frac12\|\Delta\alpha_{ab}^{H,(0)}\|^2
          +\ii\,\operatorname{Im}
          \langle\alpha_b^{H,(0)},\alpha_a^{H,(0)}\rangle
       \right]
       \equiv
       \exp[-\Gamma_{ab}^{H,(0)}+\ii\Phi_{ab}^{H,(0)}] .
       \label{eq:leading_horizon_gram_explicit}
\end{equation}
Equivalently, expressed in Feynman--Vernon notation,
\begin{equation}
       \Gamma_{ab}^{H,(0)}
       =
       \frac12
       \int (J_a-J_b)_\mu
       N_H^{(0)\mu\nu}
       (J_a-J_b)_\nu,
       \label{eq:leading_horizon_noise_kernel}
\end{equation}
where $N_H^{(0)}$ denotes the leading soft radiative noise kernel assigned to the horizon complementary output. The superscript 
$(0)$ serves as a bookkeeping device to isolate the leading/eikonal channel, thereby preventing conflation with the subleading contributions addressed in the sequel. In subsequent sections, provided no ambiguity arises, I suppress the superscript $(0)$ and denote the leading horizon quantities simply as $G^H_{ab}$, $\Gamma^H_{ab}$, and $\Phi^H_{ab}$.

This section also delineates the scope of the claims advanced in this work. I do not posit a universal subleading horizon theorem, nor do I employ the Low--Burnett--Kroll angular-momentum soft operator. The precise claim is that, whenever the semiclassical horizon output admits the projection~\eqref{eq:leading_displacement_projection} of the leading eikonal displacement, the exterior channel incorporates the leading horizon Schur factor~\eqref{eq:leading_horizon_gram_explicit}. Subsequent sections analyze the implications of this leading complementary output, specifically examining complete positivity, Gram constraints, exterior eraser bounds, local-noise equivalence, and multi-branch interferometric tests.

\section{The leading horizon Schur channel}

Building upon the conceptual distinction established in Sec.~\ref{sec:product_branch_coherence} and the leading-order projection detailed in Sec.~\ref{sec:leading_soft_input}, I first consider the scenario in which the branch currents source coherent radiation states. The dressed scattering map assumes the branch-controlled form
\begin{equation}
       U_{\dress}:
       |a\rangle_{\dress}\otimes|\Omega\rangle
       \longmapsto
       |a\rangle_{\dress}^{\out}
       \otimes
       |\Gamma_a^{\Iplus}\rangle
       \otimes
       |\Gamma_a^{\hor}\rangle,
       \label{eq:branch_unitary}
\end{equation}
where $|\Omega\rangle$ denotes the chosen initial state of the field, such as the Unruh vacuum, and $|\Gamma_a^{\Iplus}\rangle$, $|\Gamma_a^{\hor}\rangle$ are coherent states of radiative data on null infinity and the horizon.  A general input charged density matrix
\begin{equation}
       \rho=\sum_{a,b}\rho_{ab}|a\rangle\langle b|
\end{equation}
is mapped, after tracing over the horizon degrees of freedom, to
\begin{equation}
       \rho^{\Iplus}_{ab}
       =
       \rho_{ab}\,
       |a,\Gamma_a^{\Iplus}\rangle
       \langle b,\Gamma_b^{\Iplus}|
       \;G^{\hor}_{ab},
       \qquad
       G^{\hor}_{ab}=\langle\Gamma_b^{\hor}|\Gamma_a^{\hor}\rangle .
       \label{eq:horizon_trace}
\end{equation}
If the exterior radiation is also traced out, the reduced charged density matrix obeys
\begin{equation}
       (\E_{\rad}\rho)_{ab}
       =
       G^{\Iplus}_{ab}G^{\hor}_{ab}\rho_{ab},
       \qquad
       G^{\Iplus}_{ab}=\langle\Gamma_b^{\Iplus}|\Gamma_a^{\Iplus}\rangle .
       \label{eq:full_rad_schur}
\end{equation}
Consequently, the black-hole contribution manifests as the Schur multiplier
\begin{equation}
       (\E_{\hor}\rho)_{ab}=G^{\hor}_{ab}\rho_{ab} .
       \label{eq:hor_schur}
\end{equation}

\begin{proposition}[Gram positivity of the horizon channel]
Let $\{|\Gamma_a^{\hor}\rangle\}_{a\in\mathcal S}$ denotes a set of normalized vectors in the horizon radiation Hilbert space, and define the Gram matrix elements as $G^{\hor}_{ab}=\langle\Gamma_b^{\hor}|\Gamma_a^{\hor}\rangle$.  The map $\rho\mapsto G^{\hor}\circ\rho$, where $\circ$ denotes the Hadamard (entrywise) multiplication in the branch basis, is completely-positive and trace-preserving.
\end{proposition}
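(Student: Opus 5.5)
The plan is to realize the map $\rho\mapsto G^{\hor}\circ\rho$ explicitly as a Stinespring dilation followed by a partial trace. Complete positivity and trace preservation then follow without any further estimate. Let $\mathcal K$ be the horizon radiation Hilbert space. I define the linear map $W:\mathbb C^{|\mathcal S|}\to\mathbb C^{|\mathcal S|}\otimes\mathcal K$ on the branch basis by
\begin{equation}
       W|a\rangle=|a\rangle\otimes|\Gamma_a^{\hor}\rangle ,
\end{equation}
extended by linearity. This is exactly the horizon factor of the branch-controlled isometry of Eq.~\eqref{eq:branch_unitary}, with the exterior factor suppressed.

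The first step is to check that $W$ is an isometry. We have $\langle b|W^\dagger W|a\rangle=\langle b|a\rangle\langle\Gamma_b^{\hor}|\Gamma_a^{\hor}\rangle=\delta_{ab}\langle\Gamma_a^{\hor}|\Gamma_a^{\hor}\rangle=\delta_{ab}$. This uses orthonormality of the branch basis and the normalization hypothesis on the horizon vectors.

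The second step computes the partial trace. For $\rho=\sum_{a,b}\rho_{ab}|a\rangle\langle b|$,
\begin{equation}
       \Tr_{\mathcal K}\bigl(W\rho W^\dagger\bigr)
       =\sum_{a,b}\rho_{ab}\,|a\rangle\langle b|\;\Tr\bigl(|\Gamma_a^{\hor}\rangle\langle\Gamma_b^{\hor}|\bigr)
       =\sum_{a,b}\rho_{ab}\,\langle\Gamma_b^{\hor}|\Gamma_a^{\hor}\rangle\,|a\rangle\langle b| ,
\end{equation}
which is precisely $G^{\hor}\circ\rho$, consistent with Eq.~\eqref{eq:horizon_trace}. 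Any map of the form $\rho\mapsto\Tr_{\mathcal K}(W\rho W^\dagger)$ with $W$ an isometry is completely positive. To see this, tensor with $\Id_n$: the map $X\mapsto (W\otimes\Id_n)X(W\otimes\Id_n)^\dagger$ preserves positivity, and the partial trace is completely positive. Trace preservation follows from $\Tr(W\rho W^\dagger)=\Tr(W^\dagger W\rho)=\Tr\rho$. Equivalently, at the level of matrix entries, it follows from $G^{\hor}_{aa}=1$.

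As an independent cross-check, I would also verify complete positivity through the Choi matrix $C=\sum_{a,b}G^{\hor}_{ab}|a\rangle\langle b|\otimes|a\rangle\langle b|$. Up to restriction to the subspace spanned by $\{|aa\rangle\}$, this matrix is unitarily the Gram matrix $G^{\hor}$ itself. The Gram matrix is positive semidefinite because $\sum_{a,b}\bar c_a G^{\hor}_{ab}c_b$ equals the squared norm of a combination of the $|\Gamma^{\hor}\rangle$ vectors; for the stated index convention the combination is $\sum_a \bar c_a|\Gamma_a^{\hor}\rangle$. Choi's theorem then gives complete positivity.

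The main obstacle is not the algebra but the setting. The horizon radiation space is infinite-dimensional, and the coherent states live in a Fock space over the horizon one-particle space. I will therefore note that the branch index set $\mathcal S$ is finite, as the paper assumes for branch codes. Only the finite-dimensional input space and the partial trace over a separable $\mathcal K$ are needed. The partial trace is well defined on trace-class operators, and $W\rho W^\dagger$ is trace class, so no infrared subtlety enters. One also has to be careful with the index convention $G_{ab}=\langle\Gamma_b|\Gamma_a\rangle$, so that $\bar c$ versus $c$ in the positivity check lands on the correct side.
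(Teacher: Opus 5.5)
Your proof is correct, but it takes a different route from the paper's. The paper first shows that $G^{\hor}$ is positive semidefinite, since its quadratic form is a squared norm. It then obtains complete positivity from the Schur product theorem, or explicitly from the diagonal Kraus operators $K_r=\sqrt{\lambda_r}\sum_a v^{(r)}_a|a\rangle\langle a|$ built from the spectral decomposition $G^{\hor}=\sum_r\lambda_r v^{(r)}v^{(r)\dagger}$, with $\sum_rK_r^\dagger K_r=\mathbf 1$ coming from $G^{\hor}_{aa}=1$. You instead write down the Stinespring isometry $W|a\rangle=|a\rangle\otimes|\Gamma^{\hor}_a\rangle$. Positivity of $G^{\hor}$ then never has to be established as a separate step, and trace preservation is just $W^\dagger W=\mathbf 1$.

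Your route is literally the physical dilation behind Eq.~\eqref{eq:horizon_trace}. It identifies the horizon record as the environment and directly exhibits the complementary output $\rho\mapsto\sum_a\rho_{aa}|\Gamma^{\hor}_a\rangle\langle\Gamma^{\hor}_a|$, which is the language of the later eraser bound. The price is working in the infinite-dimensional horizon Fock space, which you handle correctly by noting that $W\rho W^\dagger$ has finite rank. The paper's route never leaves the $|\mathcal S|$-dimensional branch space and uses only the matrix $G^{\hor}$, so it applies verbatim to any unit-diagonal positive semidefinite matrix, such as a fitted one. It also yields a minimal family of $\operatorname{rank}G^{\hor}$ Kraus operators.

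The two routes are related by the unitary freedom of Kraus representations. The paper's $K_r$ are exactly your $\sum_a\langle e_r|\Gamma^{\hor}_a\rangle|a\rangle\langle a|$ when $\{e_r\}$ are the left singular vectors of the map $|a\rangle\mapsto|\Gamma^{\hor}_a\rangle$.

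Your Choi cross-check also gives something the paper's proof does not. The Choi matrix is $G^{\hor}$ on $\mathrm{span}\{|aa\rangle\}$ and zero elsewhere, so complete positivity is equivalent to positivity of $G^{\hor}$. That is the ``only if'' half used in the Gram-certification theorem of Appendix~\ref{app:asrd}.

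Finally, your bookkeeping $\sum_a\bar c_a|\Gamma^{\hor}_a\rangle$ is the correct one for the convention $G^{\hor}_{ab}=\langle\Gamma^{\hor}_b|\Gamma^{\hor}_a\rangle$. The paper's $\sum_a c_a|\Gamma^{\hor}_a\rangle$ actually computes the quadratic form of the transpose, which is harmless since both are nonnegative.
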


\begin{proof}
For any complex coefficients $c_a$,
\begin{equation}
       \sum_{a,b}c_a^*G^{\hor}_{ab}c_b
       =
       \left\|\sum_a c_a|\Gamma_a^{\hor}\rangle\right\|^2\ge0.
\end{equation}
Thus, $G^{\hor}$ is positive semidefinite. Given that $G^{\hor}_{aa}=1$, the Schur product map preserves the trace.  Complete positivity follows directly from the Schur product theorem, or can be demonstrated explicitly by diagonalizing $G^{\hor}=\sum_r\lambda_r v^{(r)}v^{(r)\dagger}$ and constructing the Kraus operators
\begin{equation}
       K_r=\sqrt{\lambda_r}\sum_a v^{(r)}_a |a\rangle\langle a|,
       \qquad
       \E_{\hor}(\rho)=\sum_r K_r\rho K_r^\dagger .
\end{equation}
The normalization condition $G^{\hor}_{aa}=1$ ensures that $\sum_rK_r^\dagger K_r=\mathbf 1$.
\end{proof}

the horizon-overlap framework to the status of a rigorous quantum channel. Furthermore, it imposes nontrivial consistency conditions across multiple branches. For systems comprising three or more branches, the pairwise decoherence factors cannot be assigned arbitrarily; rather, they must collectively constitute a positive semidefinite Gram matrix. This provides a rigorous, falsifiable constraint on any proposed phenomenological model of horizon-induced decoherence.

In the case of coherent radiation, the Gram entries take the form
\begin{equation}
       G^{\hor}_{ab}
       =
       \exp\left[
          -\half\|\alpha_a^{\hor}-\alpha_b^{\hor}\|^2
          +\ii\,\operatorname{Im}\langle\alpha_b^{\hor},\alpha_a^{\hor}\rangle
       \right].
       \label{eq:coherent_gram}
\end{equation}
The real part governs the decoherence, while the imaginary part corresponds to a branch-dependent radiative phase. An analogous structure holds for gravitational coherent states upon substituting the electromagnetic current with the stress-energy tensor and the vector potential with the linearized metric perturbation, subject to the center-of-mass and gauge constraints pertinent to the gravitational sector.

\section{Leading Feynman-Vernon horizon inclusivity}

The coherent-state formula constitutes the pure-state realization of the Feynman--Vernon influence functional. For conserved currents within a Gaussian electromagnetic state, the reduced charged dynamics assumes the closed-time-path (CTP) form
\begin{equation}
       \F[J_+,J_-]
       =
       \exp\left\{
          \ii\int J^-_\mu D_{\mathrm R}^{\mu\nu}J^+_\nu
          -\half\int J^-_\mu N^{\mu\nu}J^-_\nu
       \right\},
       \label{eq:FV}
\end{equation}
where $J^\pm=(J_+\pm J_-)/2$, $D_{\mathrm R}$ denotes the retarded propagator, and $N$ represents the Hadamard (noise) kernel of the radiative sector retained in the trace.  For branch labels $a,b$ one identifies $J_+=J_a$, $J_-=J_b$, such that the decoherence exponent is governed by the quadratic form in the current difference $J_a-J_b$.

The primary identity is that of normalization. When the two histories coincide,
\begin{equation}
       {
       \F[J,J]
       =
       \Tr_{\rad}[U[J]\rho_{\rad}U[J]^\dagger]
       =1 .}
       \label{eq:fixed_history_FV_identity}
\end{equation}
This represents the Feynman--Vernon manifestation of fixed-outcome unitarity and the real/virtual infrared cancellation inherent in Bloch--Nordsieck reasoning. The horizon effect discussed in this work does not violate Eq.~\eqref{eq:fixed_history_FV_identity}; it does not constitute a residual diagonal infrared divergence in a reduced hard probability. Rather, it pertains to unequal histories,
\begin{equation}
       G_{ab}
       =
       \F[J_a,J_b],
       \qquad J_a\ne J_b,
       \label{eq:G_full_FV}
\end{equation}
which modulate the off-diagonal elements of the branch or hard-sector density matrix.

In favorable semiclassical scattering regimes, the late-time radiation algebra is represented by exterior modes at $\Iplus$ and horizon-crossing modes at $\Hplus$.  It is crucial, however, not to elevate this geometric statement to the status of an unconditional factorization theorem. The general Gaussian kernel, following an $\Iplus/H$ bipartition, assumes the form
\begin{equation}
       N
       =
       N_{\Iplus\Iplus}
       +N_{HH}
       +N_{\Iplus H}
       +N_{H\Iplus},
       \label{eq:general_cross_kernel_split}
\end{equation}
with an analogous structure for the dissipative phase kernel. Consequently, the decoherence exponent comprises
\begin{equation}
       \Gamma_{ab}
       =
       \Gamma^{\Iplus}_{ab}
       +
       \Gamma^{H}_{ab}
       +
       \Gamma^{\Iplus H}_{ab} .
       \label{eq:Gamma_cross_terms}
\end{equation}
The cross term vanishes only when the chosen out-mode representation, covariance, and state render the exterior and horizon radiative sectors orthogonal, or block-diagonal, with respect to the family of branch-current differences under consideration.

Thus, the product formula
\begin{equation}
       G_{ab}
       =
       G^{\Iplus}_{ab}G^H_{ab}
       \label{eq:factorized_G_conditioned}
\end{equation}
is not a kinematic identity but rather characterizes a controlled factorized-channel regime. A sufficient condition for this is a one-particle direct sum
\begin{equation}
       \K_{\rad}
       =
       \K_{\Iplus}\oplus\K_H,
       \qquad
       \alpha_a
       =
       \alpha_a^{\Iplus}\oplus\alpha_a^H,
       \label{eq:one_particle_direct_sum}
\end{equation}
for the branch-dependent coherent displacement.  Then coherent-state inner products then yields
\begin{equation}
       G_{ab}
       =
       \exp\left[-\half\|\alpha_a^{\Iplus}-\alpha_b^{\Iplus}\|^2
              -\half\|\alpha_a^{H}-\alpha_b^{H}\|^2
              +\ii\Phi_{ab}\right]
       =
       G^{\Iplus}_{ab}G^H_{ab}.
       \label{eq:coherent_product_G}
\end{equation}
In this regime the horizon contribution is given by
\begin{equation}
       G^{H,(0)}_{ab}
       =
       \exp\left[
          \ii\Phi^{H,(0)}_{ab}
          -\half\int (J_a-J_b)_\mu
                   N_{H}^{(0)\mu\nu}
                   (J_a-J_b)_\nu
       \right],
       \label{eq:horizon_FV_gram_revised}
\end{equation}
with $N_H$  is positive definite on physical radiative test currents. In the presence of cross kernels, $G^H_{ab}$ is replaced by the appropriate complementary-channel distinguishability or conditional fidelity; the channel formalism remains valid, whereas the naive product notation does not.
This refined bookkeeping also clarifies the relationship to black-hole complementarity. I neither assume nor deny black-hole complementarity. If an exact theory of quantum gravity renders the complete asymptotic algebra at $\Iplus$ information-complete, this would constitute a statement regarding $\A_{\Iplus}^{\rm exact}$, rather than an identification of the ordinary semiclassical soft algebra $\A_{\Iplus}^{\rm semi,soft}$ with the horizon soft algebra $\A_H^{\rm semi,soft}$. The present construction is strictly a semiclassical exterior-channel statement: within the low-energy radiative algebra employed in the Feynman--Vernon calculation, the horizon output acts as a complementary environment unless an explicit reconstruction map is provided.
Flat-space infrared cancellation can now be articulated within the channel formalism. In standard scattering processes, hard charged observables are not described by exclusive amplitudes with a fixed number of zero-energy photons; rather, they are described by inclusive probabilities, or equivalently, by a reduced hard channel obtained after tracing over unresolved soft radiation up to a detector resolution 
$\Lambda$. Virtual infrared divergences and real unresolved radiation combine to yield a finite, completely positive map on the hard charged sectors. In the open quantum system (OQS) description, the Bloch--Nordsieck factor corresponds to the influence functional of the soft bath, and Eq.~\eqref{eq:fixed_history_FV_identity} represents the trace-preserving normalization for a fixed hard history.

The black-hole exterior modifies the unequal-history sector. An exterior observer can trace, measure, or condition upon radiation reaching $\Iplus$, but cannot employ standard exterior operations to erase a complementary horizon record. Consequently, the residual black-hole effect does not constitute a violation of fixed-history Feynman--Vernon unitarity; rather, it represents an obstruction to exterior control over the unequal-history infrared factor.

This constitutes the central conceptual result of this work: {\sl black-hole leading-soft decoherence is the same as inaccessible off-diagonal Feynman-Vernon record}.
This statement is intended in an operational sense. It holds once a detector algebra, a resolution function, a dressing convention, and, where applicable, a factorized out-channel representation have been specified.

\begin{theorem}[Leading horizon-restricted off-diagonal Feynman-Vernon channel] Fix a dressed charged branch subspace, a detector resolution function 
$\chi$ for the exterior radiation, and a late-time radiation representation. The complete normalized influence functional obeys  $\F[J,J]=1$.  For unequal branch histories $J_a\ne J_b$, the exterior reduced branch channel assumes the Schur form
\begin{equation}
       (\E_{\rm out}\rho)_{ab}
       =
       \rho_{ab}G^{\rm out}_{ab},
       \qquad
       G^{\rm out}_{ab}=\F_{\rm out}[J_a,J_b].
       \label{eq:restricted_FV_theorem_general}
\end{equation}
In a factorized $\Iplus/H$ out-channel regime,
\begin{equation}
       G^{\rm out}_{ab}
       =
       G^{\Iplus}_{ab}(\chi)G^H_{ab},
       \label{eq:restricted_FV_theorem_factorized}
\end{equation}
where $G^{\Iplus}_{ab}$ is dependent on the detector and conditioning, whereas $G^H_{ab}$ is determined by the complementary horizon output. Consequently, maximal exterior infrared inclusivity can eliminate or condition upon exterior records but cannot, within the semiclassical exterior algebra, eliminate the horizon factor. \end{theorem}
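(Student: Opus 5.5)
The proof proceeds in three steps: fixed-history normalization, Schur form for unequal histories, then factorization and the exterior no-erasure statement.

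\emph{Step 1: fixed-history normalization.} I would start from the closed-time-path form \eqref{eq:FV}. For $J_+=J_-=J$ the difference variable vanishes, so the noise term drops. Depending on which of the $J^\pm$ multiplies the retarded term, either it also vanishes identically, or it reduces to a purely real-in-the-exponent phase that must cancel by the unitarity argument. The cleaner route is to bypass \eqref{eq:FV} and use the operator identity \eqref{eq:fixed_history_FV_identity} directly: $\Tr_{\rad}[U[J]\rho_{\rad}U[J]^\dagger]=\Tr\rho_{\rad}=1$ by cyclicity and unitarity of $U[J]$. This holds for the full radiation algebra $\A_{\Iplus}\vee\A_{\Hplus}$ and also for any partial trace followed by the remaining trace. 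So $\F[J,J]=1$ holds for the complete functional and for $\F_{\rm out}$ alike.

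\emph{Step 2: Schur form for unequal histories.} Consider the branch-controlled isometry \eqref{eq:branch_unitary}, or more generally $V:|a\rangle\otimes\rho_{\rad}\mapsto |a\rangle\langle a|\otimes U[J_a]\rho_{\rad}U[J_a]^\dagger$ extended by linearity to coherences. The output channel is defined as follows: apply $V$, trace over $\A_{\Hplus}$ and over the unresolved exterior sector $\A_{\Iplus\setminus\det}$ (weighted by $\chi$), and discard or average any conditioning outcome. The $(a,b)$ entry is then $\rho_{ab}\,\Tr[U[J_a]\rho_{\rad}U[J_b]^\dagger\,\Pi_\chi]$, where $\Pi_\chi$ encodes the detector treatment. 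By definition this trace is $\F_{\rm out}[J_a,J_b]$, which gives \eqref{eq:restricted_FV_theorem_general}. Complete positivity and trace preservation follow from the Stinespring form, together with Step 1 for the diagonal entries. Alternatively they follow from the Gram-positivity Proposition, once $\F_{\rm out}[J_a,J_b]$ is exhibited as an inner product of branch-conditioned environment vectors, obtained by purifying $\rho_{\rad}$ if necessary.

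\emph{Step 3: factorization and the no-erasure claim.} I would invoke the factorized regime hypothesis in the sufficient form \eqref{eq:one_particle_direct_sum}. With the one-particle direct sum, the coherent states factor as tensor products over $\Iplus$ and $H$, the overlap factors, and \eqref{eq:coherent_product_G} gives $G^{\rm out}_{ab}=G^{\Iplus}_{ab}(\chi)G^H_{ab}$. The $\chi$-dependence sits entirely in the first factor, because $\Pi_\chi$ acts only on $\A_{\Iplus}$.

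For the final sentence, consider any exterior instrument, including feed-forward recovery. It is a channel acting on $\A_{\Iplus}\vee\A_{\lab}$, and such a channel commutes with the partial trace over $\A_{\Hplus}$. Hence, after the horizon trace, every branch-pair coherence still carries the factor $G^H_{ab}$, and the best exterior operation can at most set the exterior factor to modulus one. I would make this precise with a fidelity or data-processing bound: the final visibility satisfies $|\rho'_{ab}|\le|G^H_{ab}|\,|\rho_{ab}|$, since $|G^H_{ab}|$ is the fidelity of the complementary horizon records, which no map on the complementary system can increase.

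\emph{Main obstacle.} I expect the main obstacle to be this last bound in the presence of conditioning. Post-selected branches of an instrument are not trace-preserving, so the bound must be stated for the averaged channel. Alternatively it can be stated via the Uhlmann inequality $|\langle\Phi_b^H|\Phi_a^H\rangle|\ge$ achievable coherence, which I would apply first. I would also need to note explicitly that the factorized claim is conditional on the regime hypothesis, with cross kernels \eqref{eq:Gamma_cross_terms} excluded.
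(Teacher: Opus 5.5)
Your proposal is correct and follows the paper's argument essentially step for step: the branch-controlled Stinespring isometry and partial trace give the Schur form, unitarity gives $\F[J,J]=1$ via \eqref{eq:fixed_history_FV_identity}, the one-particle direct sum \eqref{eq:one_particle_direct_sum} gives the factorization, and exterior operations cannot remove the horizon factor because they commute with $\Tr_{\Hplus}$ (you give more detail than the paper's sketch, whose quantitative counterpart appears later as the exterior quantum-eraser proposition). One caution for your ``make this precise'' step: state the bound for the coefficient $|\langle a|\mathcal M(|a\rangle\langle b|)|b\rangle|$ of the composite exterior map $\mathcal M$, which by linearity is $|G^H_{ab}|$ times a matrix element of a CPTP map applied to the unit-trace-norm operator $|a,\Gamma^{\Iplus}_a\rangle\langle b,\Gamma^{\Iplus}_b|$ and is therefore at most $|G^H_{ab}|$, rather than for the raw ratio $|\rho'_{ab}|/|\rho_{ab}|$, which a non-branch-diagonal recovery (or a postselection that reshuffles populations) can push above $|G^H_{ab}|$.
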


\begin{proof}[Argument] The dressed evolution induces a Stinespring isometry from the branch space to the tensor product of the branch and radiation spaces. Fixed-history normalization is given by Eq.~\eqref{eq:fixed_history_FV_identity}. For unequal histories, tracing over or measuring a specified radiation algebra yields a Schur multiplier on the branch coherences. If the late-time radiation representation factorizes into exterior and horizon outputs, the coherent-state Gram inner product factorizes as shown in Eq.~\eqref{eq:coherent_product_G}. Exterior detector operations can modify the exterior factor, but not the complementary horizon factor. In the absence of such factorization, the same conclusion holds within the complementary-channel formalism, with the horizon contribution replaced by the corresponding conditional fidelity or distinguishability of the inaccessible records. \end{proof}

This theorem marks the first point at which the present analysis departs from a mere reformulation of the prior two-branch overlap picture. It predicts an operational residual: even an ideal exterior detector implementing maximal soft monitoring at $\Iplus$ yields a residual visibility bounded by the complementary horizon distinguishability,
\begin{equation}
       V_{\rm max}^{\rm ext}(a,b)
       \leq
       |G^{H}_{ab}|
       \label{eq:max_ext_visibility_revised}
\end{equation}
when the factorized horizon-record description is valid, and by the corresponding complementary-channel fidelity in the general case. This does not imply that photons are lost; rather, it asserts that exterior operations cannot control all unequal-history infrared records.

\begin{remark}[Scope beyond black holes] An analogous structure arises whenever spacetime geometry, boundary conditions, acceleration, or material response partitions the soft sector into detector-accessible and detector-inaccessible records. Black holes are distinguished by the fact that the inaccessible factor is geometrically sharp, and its low-frequency kernel is fixed by the horizon state. The channel distinction originates from this specific spectral and algebraic structure, rather than from the generic observation that black holes act as environments. \end{remark}

\begin{definition}[Triangular channel holonomy] For three branch histories with nonzero pairwise overlaps, define
\begin{equation}
       \mathcal B_{123}
       =
       G_{12}G_{23}G_{31},
       \qquad
       \mathcal U_{123}
       =
       {\mathcal B_{123}\over |\mathcal B_{123}|}
       =e^{\ii\Theta_{123}} .
       \label{eq:Bargmann_invariant}
\end{equation}
The complex number $\mathcal B_{123}$ is the Bargmann invariant of the environment-induced Gram matrix, and $\Theta_{123}$ is the corresponding channel holonomy phase.  Under rephasings of either the branch basis or the environmental representatives, the individual phases of $G_{ab}$ shift by a coboundary, whereas $\mathcal B_{123}$ remains invariant.  Consequently, the two-arm phase is convention-dependent, while the three-arm cyclic phase is not.
\end{definition}

For coherent leading-soft records,
\begin{equation}
       G_{ab}^{H,(0)}
       =
       \exp\left[-\half
       \|\alpha_a^{H,(0)}-\alpha_b^{H,(0)}\|^2
       +\ii\,\operatorname{Im}
       \langle\alpha_b^{H,(0)},\alpha_a^{H,(0)}\rangle\right].
       \label{eq:leading_coherent_pair_overlap}
\end{equation}
The horizon Bargmann invariant is therefore given by
\begin{equation}
       \mathcal B^{H,(0)}_{123}
       =
       \exp\left[-\half
       \sum_{\rm cyc}\|\alpha_a^{H,(0)}-\alpha_b^{H,(0)}\|^2
       +\ii\Theta^{H,(0)}_{123}\right],
       \label{eq:Bargmann_coherent}
\end{equation}
with
\begin{equation}
       \Theta^{H,(0)}_{123}
       =
       \operatorname{Im}\left(
       \langle\alpha_2^{H,(0)},\alpha_1^{H,(0)}\rangle
       +\langle\alpha_3^{H,(0)},\alpha_2^{H,(0)}\rangle
       +\langle\alpha_1^{H,(0)},\alpha_3^{H,(0)}\rangle
       \right).
       \label{eq:Theta_horizon_def}
\end{equation}
Equivalently, if $\Omega(\alpha,\beta)=2\operatorname{Im}\langle\beta,\alpha\rangle$ denotes the canonical symplectic form on the one-particle soft phase space, then
\begin{equation}
       \Theta^{H,(0)}_{123}
       =
       {1\over2}
       \bigl[
       \Omega(\alpha_1^{H,(0)},\alpha_2^{H,(0)})
       +\Omega(\alpha_2^{H,(0)},\alpha_3^{H,(0)})
       +\Omega(\alpha_3^{H,(0)},\alpha_1^{H,(0)})
       \bigr] .
       \label{eq:symplectic_area_holonomy}
\end{equation}
Equation~\eqref{eq:symplectic_area_holonomy} denotes the oriented symplectic area of the triangle with vertices $\alpha_1^{H,(0)},\alpha_2^{H,(0)},\alpha_3^{H,(0)}$ in horizon soft phase space.  Since $\alpha_a^{H,(0)}=P_H\mathcal K_\Omega J_a$, this phase is a functional of the three branch currents and the horizon projection.  No branch-local phase assignment reproduces it.

\begin{proposition}[Holonomy tests of the leading horizon channel]
In the coherent leading-soft factorized regime the normalized horizon holonomy obeys
\begin{align}
       \mathcal U^{H,(0)}_{132}
       &=
       \bigl(\mathcal U^{H,(0)}_{123}\bigr)^{-1},
       \label{eq:orientation_reversal_holonomy}\\
       \mathcal U^{H,(0)}_{123}[\alpha_a+\beta]
       &=
       \mathcal U^{H,(0)}_{123}[\alpha_a],
       \label{eq:common_mode_holonomy}\\
       \mathcal U^{H,(0)}_{123}\mathcal U^{H,(0)}_{134}
       &=
       \mathcal U^{H,(0)}_{124}\mathcal U^{H,(0)}_{234}
       \label{eq:four_branch_holonomy_identity}
\end{align}
whenever the displayed phases are well-defined.  Moreover, if the horizon record were equivalent to real pairwise attenuation augmented by removable branch phases, such that
\begin{equation}
       G^H_{ab}=r_{ab}e^{\ii(\varphi_a-\varphi_b)},
       \label{eq:coboundary_phase_model}
\end{equation}
then $\Theta^H_{123}=0$ for every triple.  A residual nonzero $\Theta^{H,(0)}_{123}$ following exterior erasure therefore falsifies that scalar dephasing model.
\end{proposition}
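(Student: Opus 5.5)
The plan is to reduce every claim to the explicit coherent-state formula \eqref{eq:leading_coherent_pair_overlap}. In the factorized coherent regime the real attenuation factors are positive, so they drop out of $\mathcal U$. Thus $\mathcal U^{H,(0)}_{123}=e^{\ii\Theta^{H,(0)}_{123}}$, with $\Theta$ given by the symplectic-area form \eqref{eq:symplectic_area_holonomy}. Throughout, I will write $\Omega(\alpha,\beta)=2\operatorname{Im}\langle\beta,\alpha\rangle$ and use three elementary facts about it:
\begin{itemize}
\item it is real-bilinear;
\item it is antisymmetric, $\Omega(\alpha,\beta)=-\Omega(\beta,\alpha)$, which follows from $\langle\beta,\alpha\rangle=\overline{\langle\alpha,\beta\rangle}$;
\item $\Omega(\alpha,\alpha)=0$.
\end{itemize}
Each identity then becomes an algebraic statement about the cyclic sum $A(1,2,3)=\tfrac12[\Omega(\alpha_1,\alpha_2)+\Omega(\alpha_2,\alpha_3)+\Omega(\alpha_3,\alpha_1)]$. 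Phases are only defined modulo $2\pi$, and every pairwise $G_{ab}$ involved is nonzero, which is automatic for coherent states. This is what ``whenever the displayed phases are well-defined'' covers.

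\emph{Orientation reversal \eqref{eq:orientation_reversal_holonomy}.} Swapping the labels $2\leftrightarrow3$ turns $A(1,3,2)$ into $\tfrac12[\Omega(\alpha_1,\alpha_3)+\Omega(\alpha_3,\alpha_2)+\Omega(\alpha_2,\alpha_1)]$. By antisymmetry this equals $-A(1,2,3)$. Equivalently, and directly at the level of Gram entries, $G_{ba}=\overline{G_{ab}}$, so $\mathcal B_{132}=\overline{\mathcal B_{123}}$.

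\emph{Common mode \eqref{eq:common_mode_holonomy}.} Substitute $\alpha_a\mapsto\alpha_a+\beta$ for all $a$ and expand by bilinearity. The $\Omega(\beta,\beta)$ terms vanish. The terms linear in $\beta$ collect as $\Omega(\alpha_1+\alpha_2+\alpha_3,\beta)+\Omega(\beta,\alpha_1+\alpha_2+\alpha_3)$: each $\alpha_a$ appears once as a first argument and once as a second argument around the cycle. This sum is zero by antisymmetry, so $A$ is translation invariant.

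This also gives the geometric reading. Using translation invariance, shift so that $\alpha_1=0$. Then $A(1,2,3)=\tfrac12\Omega(\alpha_2-\alpha_1,\alpha_3-\alpha_1)$, which is exactly the oriented symplectic area of the triangle.

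\emph{Triangulation \eqref{eq:four_branch_holonomy_identity}.} It suffices to show $A(1,2,3)+A(1,3,4)=A(1,2,4)+A(2,3,4)$ as real numbers. Write both sides out. The diagonal edge $1\text{--}3$ appears as $\Omega(\alpha_3,\alpha_1)+\Omega(\alpha_1,\alpha_3)=0$ on the left. The diagonal edge $2\text{--}4$ appears as $\Omega(\alpha_4,\alpha_2)+\Omega(\alpha_2,\alpha_4)=0$ on the right. Both sides therefore reduce to the boundary sum $\tfrac12[\Omega(\alpha_1,\alpha_2)+\Omega(\alpha_2,\alpha_3)+\Omega(\alpha_3,\alpha_4)+\Omega(\alpha_4,\alpha_1)]$ of the quadrilateral. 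This is a cocycle (Stokes-type) identity. The only care needed is bookkeeping of orientations, and I expect this to be the main place where a sign slip could occur. I would cross-check it against the Gram-level statement $\mathcal B_{123}\mathcal B_{134}/(\mathcal B_{124}\mathcal B_{234})$: written in Gram entries this ratio is $G_{13}G_{31}/(G_{24}G_{42})=|G_{13}|^2/|G_{24}|^2>0$, whose phase is trivial.

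\emph{Coboundary model \eqref{eq:coboundary_phase_model}.} If $G^H_{ab}=r_{ab}e^{\ii(\varphi_a-\varphi_b)}$ with $r_{ab}>0$, the product around the cycle is
\[
\mathcal B_{123}=r_{12}r_{23}r_{31}\,e^{\ii[(\varphi_1-\varphi_2)+(\varphi_2-\varphi_3)+(\varphi_3-\varphi_1)]}=r_{12}r_{23}r_{31}>0,
\]
so $\Theta^H_{123}=0$. Contrapositively, a measured nonzero $\Theta^{H,(0)}_{123}$ after exterior erasure is incompatible with that model. Erasure matters here because in the factorized regime of Theorem~1 erasure removes $G^{\Iplus}$. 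One caveat: if some $r_{ab}$ were allowed to be negative, $\Theta$ could equal $\pi$. I would therefore state the hypothesis as $r_{ab}>0$, consistent with ``real attenuation''.
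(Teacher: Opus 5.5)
Your proposal is correct and follows the paper's own argument step for step. Orientation reversal comes from conjugating the Hermitian Gram product (equivalently, antisymmetry of $\Omega$), common-mode invariance from the cancellation of the $\beta$-terms in the symplectic-area expression, the four-branch identity from the triangulation-independence of the quadrilateral's oriented area, and the null result from the telescoping coboundary phases. Your explicit bookkeeping sharpens the paper's terse sketch in two places: the Gram-level check $\mathcal B_{123}\mathcal B_{134}/(\mathcal B_{124}\mathcal B_{234})=|G_{13}|^2/|G_{24}|^2>0$ shows the triangulation identity needs only Hermiticity and nonvanishing entries, and you correctly add the hypothesis $r_{ab}>0$ in the coboundary model.
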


\begin{proof}[Argument]
Orientation reversal complex-conjugates the Hermitian Gram product.  Common-mode invariance follows from the cancellation of all terms containing $\beta$ in the symplectic-area expression~\eqref{eq:symplectic_area_holonomy}.  The four-branch identity reflects the fact that the oriented area of a quadrilateral is independent of its triangulation.  Finally, Equation~\eqref{eq:coboundary_phase_model} yields $\arg(G_{12}G_{23}G_{31})=(\varphi_1-\varphi_2)+(\varphi_2-\varphi_3)+(\varphi_3-\varphi_1)=0$.
\end{proof}

Complete positivity supplies an independent, modulus-dependent constraint. Defining $r_{ab}=|G^H_{ab}|$ and $\Theta^H_{123}=\arg(G^H_{12}G^H_{23}G^H_{31})$, the positivity of the $3\times3$ principal Gram minor requires
\begin{equation}
       1
       +2r_{12}r_{23}r_{31}\cos\Theta^H_{123}
       -r_{12}^2-r_{23}^2-r_{31}^2
       \ge0 .
       \label{eq:theta_gram_bound}
\end{equation}
This inequality provides a direct falsifiability criterion for phenomenological models of horizon decoherence. Specifically, three fitted pairwise visibilities and a fitted cyclic phase that violate Eq.~\eqref{eq:theta_gram_bound} fail the completely positive (CP) Schur-channel test. In the $\Iplus/H$ factorized regime,
\begin{equation}
       \mathcal B_{123}
       =
       \mathcal B^{\Iplus}_{123}\mathcal B^{H}_{123},
       \qquad
       \mathcal B^{H}_{123}=G^H_{12}G^H_{23}G^H_{31}.
       \label{eq:Bargmann_factorized}
\end{equation}
Exterior monitoring and feed-forward can alter the exterior factor, but a residual $\mathcal B^H_{123}$ remains as the cyclic invariant of the complementary horizon output.  This yields a sharp three-branch observable of the horizon channel that remains invisible in a purely pairwise, two-path analysis.

\section{Operational representations and channel consequences}

The leading horizon Schur channel admits multiple operational interpretations. These are frequently discussed in isolation as local noise, detector dependence, adiabaticity, finite-time recovery, or material mimicry. Within the present formulation, however, these aspects constitute a unified framework: each statement specifies a particular representation or control mechanism for the identical completely positive map acting on the branch code.

\subsection{Kernel representations}

The horizon and local descriptions yield the identical channel kernel.  Let $f^\mu=J_a^\mu-J_b^\mu$ denote the current difference of two branch histories, which is compactly supported within Alice's controlled experiment.  Let $\Delta=G_{\mathrm{adv}}-G_{\mathrm{ret}}$ denote the causal propagator, and let $\K_\Omega$ denote the one-particle projection associated with the chosen quasifree state $|\Omega\rangle$.  The coherent radiation generated by the retarded solution possesses the one-particle wavefunction
\begin{equation}
       \alpha_{ab}=\K_\Omega G_{\mathrm{ret}}f .
\end{equation}
Upon the source returning to an identical final Coulomb configuration on both branches, the late-time radiative solution can be equivalently expressed via the causal propagator, yielding
\begin{equation}
       \|\alpha_{ab}\|^2
       =
       \|\K_\Omega\Delta f\|^2
       =
       \langle\Omega|A^{\inm}(f)A^{\inm}(f)|\Omega\rangle .
       \label{eq:local_identity}
\end{equation}
This equation constitutes the local two-point-function formula, representing the same positive quadratic form evaluated prior to propagation to the asymptotic radiative surface.

In a black-hole background, the one-particle radiative space decomposes, within the asymptotic regime employed throughout this work, into exterior and horizon components,
\begin{equation}
       \Hil^{(1)}_{\rad}
       \simeq
       \Hil^{(1)}_{\Iplus}\oplus\Hil^{(1)}_{\Hplus} .
\end{equation}
Let $P_{\hor}$ and $P_{\Iplus}$ denote the projectors onto these respective components.  The two contributions to the coherent displacement norm are given by 
\begin{equation}
       \|\alpha_{ab}^{\hor}\|^2
       =
       \|P_{\hor}\K_\Omega\Delta f\|^2,
       \qquad
       \|\alpha_{ab}^{\Iplus}\|^2
       =
       \|P_{\Iplus}\K_\Omega\Delta f\|^2 .
       \label{eq:projected_norms}
\end{equation}
The local kernel incorporates both terms through the state and the geometry, whereas the horizon calculation isolates the component residing in the inaccessible complementary algebra.

\begin{proposition}[Channel equivalence] In the linear-source, Gaussian-field regime, three distinct quantities determine the identical branch-pair decoherence exponent: the norm of the difference between the horizon coherent states generated by the retarded field, the horizon-projected one-particle norm $\|P_{\hor}\K_\Omega\Delta(J_a-J_b)\|^2$, and the corresponding positive part of the local two-point function smeared with $J_a-J_b$.  The horizon-radiation and local-noise descriptions thus constitute two representations of the same Schur-channel Gram entry.
\end{proposition}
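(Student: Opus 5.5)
The plan is to prove the three-way equivalence by chaining two identities. The first relates the coherent-state overlap to the projected one-particle norm. The second relates the full one-particle norm to the symmetrized two-point function. Throughout I work with a quasifree (Gaussian) state $|\Omega\rangle$ and a linear coupling $\int J_\mu A^\mu$, so the dressed evolution on branch $a$ acts on the radiation as a Weyl displacement by $\alpha_a=\K_\Omega G_{\mathrm{ret}}J_a$.

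\textbf{Step 1: overlap to projected norm.} The Weyl relations give $\langle W(\alpha_b)\Omega|W(\alpha_a)\Omega\rangle=\exp[-\tfrac12\|\alpha_a-\alpha_b\|^2+\ii\operatorname{Im}\langle\alpha_b,\alpha_a\rangle]$. Restricted to the horizon block of $\Hil^{(1)}_{\Iplus}\oplus\Hil^{(1)}_{\Hplus}$, the same formula, already recorded in Eq.~\eqref{eq:coherent_gram}, shows that $-\log|G^{\hor}_{ab}|=\tfrac12\|P_{\hor}(\alpha_a-\alpha_b)\|^2$. Linearity of $\K_\Omega G_{\mathrm{ret}}$ turns the argument into $P_{\hor}\K_\Omega G_{\mathrm{ret}} f$ with $f=J_a-J_b$. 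This identifies the first quantity with the second, except that $G_{\mathrm{ret}}$ appears where the statement has $\Delta$.

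\textbf{Step 2: replacing $G_{\mathrm{ret}}$ by $\Delta$.} I use the hypothesis that both branches return to the same final Coulomb configuration. Then $f$ is compactly supported and conserved. Hence $G_{\mathrm{adv}}f$ vanishes to the future of $\operatorname{supp} f$, so it contributes nothing to late-time radiative data on $\Iplus\cup\Hplus$. On those surfaces, therefore, $G_{\mathrm{ret}}f$ and $-\Delta f$ induce the same Cauchy data, up to an irrelevant sign. This gives $\|P_{\hor}\K_\Omega G_{\mathrm{ret}}f\|=\|P_{\hor}\K_\Omega\Delta f\|$, and similarly for the total norm. This is Eq.~\eqref{eq:local_identity} with the projection inserted.

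\textbf{Step 3: link to the local two-point function.} In the GNS one-particle structure of a quasifree state, $A(f)\Omega$ has one-particle component $\K_\Omega\Delta f$. Hence $\langle\Omega|A(f)A(f)|\Omega\rangle=\|\K_\Omega\Delta f\|^2$. Its real symmetric part equals $\int f_\mu N^{\mu\nu}f_\nu$, which is the local-noise expression Eq.~\eqref{eq:local_noise_general}; the commutator part is purely imaginary and drops out. Current conservation removes the gauge-dependent longitudinal pieces, so the expression is well defined on the physical phase space. Inserting the orthogonal decomposition gives $\|\K_\Omega\Delta f\|^2=\|P_{\hor}\K_\Omega\Delta f\|^2+\|P_{\Iplus}\K_\Omega\Delta f\|^2$, by Eq.~\eqref{eq:projected_norms}. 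The ``corresponding positive part'' of the local kernel is then defined as the pullback $N_H:=\Delta^{*}\K_\Omega^{*}P_{\hor}\K_\Omega\Delta$. This is manifestly positive, and by construction it reproduces $\Gamma^{H}_{ab}$ of Eq.~\eqref{eq:leading_horizon_noise_kernel}.

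\textbf{Main obstacle.} The hard step is justifying the orthogonal decomposition itself. It requires that $\K_\Omega$ map Klein--Gordon-type symplectic data on a Cauchy surface isometrically onto $\Hil^{(1)}_{\Iplus}\oplus\Hil^{(1)}_{\Hplus}$ with vanishing cross terms, which is exactly the factorized regime of Eq.~\eqref{eq:one_particle_direct_sum}. For the Unruh (or Hartle--Hawking) state I would invoke the known result that the state's one-particle structure is determined by its restrictions to $\Hplus$ (positive affine frequency) and $\Iplus$ (positive Killing frequency). I would also invoke the energy-flux argument that the symplectic form splits as a sum of fluxes through the two surfaces. This argument needs the absence of radiation escaping to timelike infinity, i.e.\ decay of the field.

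If cross terms are present, I would state the equivalence only for the full kernel. The horizon entry is then replaced by the complementary-channel quantity, as the paper already allows.
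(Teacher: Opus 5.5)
Your proposal is correct and is essentially the paper's own argument: Weyl overlap to the one-particle norm of the displacement difference, $G_{\mathrm{ret}}f\to-\Delta f$ on $\Iplus\cup\Hplus$ via the common final Coulomb configuration (Eq.~\eqref{eq:local_identity}), identification with the smeared in-field two-point function, and horizon projection, with the orthogonal split $\Hil^{(1)}_{\Iplus}\oplus\Hil^{(1)}_{\Hplus}$ taken as the factorized-regime hypothesis rather than derived, as the paper also does. Do not rely on the ``known result'' you cite to derive that split, because it is misstated: the Unruh state is fixed by positive affine frequency on $\mathcal H^-$ and positive Killing frequency on $\mathscr I^-$, it carries the Hawking flux at $\Iplus$, and greybody scattering of its thermal $\mathcal H^-$ modes produces nonzero $\Hplus$--$\Iplus$ cross-correlations, so the split holds only approximately in the low-frequency window where barrier transmission vanishes, and your fallback clause is the right way to state the general case.
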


\begin{proof}[Sketch]
The source shifts the Gaussian field by a classical solution.  The overlap between two displaced Gaussian states depends exclusively on the one-particle norm of the difference of the classical shifts.  The one-particle norm equals the two-point function of the in-field smeared with the source difference.  Projection onto the horizon component yields the complementary-channel contribution.  Substituting $A_\mu$ with the linearized metric perturbation and $J_\mu$ with the stress-energy tensor yields the gravitational analog upon imposing the linearized constraints.
\end{proof}

This proposition clarifies the relationship to the local reformulation of horizon decoherence~\cite{DSWLocal2025}. The local calculation eliminates the horizon as an explicit computational boundary; it does not, however, resolve the underlying algebraic question. The local kernel quantifies the total noise generated by the state and the geometry, whereas the open quantum system (OQS) bipartition specifies which portion of that noise can be conditioned upon or recovered via exterior operations.

\subsection{Exterior instruments and eraser bounds}
A reduced density matrix represents the unconditioned output of an experiment, whereas a detector defines a quantum instrument. Let  $\{M_\xi\}$ denote a positive operator-valued measure (POVM) or measurement model acting on the accessible exterior radiation algebra, yielding outcome $\xi$.  The corresponding instrument on the charged branch space is given by 
\begin{equation}
       \I_\xi(\rho)
       =
       \Tr_{\Hplus}\Tr_{\Iplus_{\unres}}
       \left[
          M_\xi
          U_{\dress}(\rho\otimes|\Omega\rangle\langle\Omega|)U_{\dress}^\dagger
          M_\xi^\dagger
       \right] .
       \label{eq:instrument}
\end{equation}
This instrument satisfies $\sum_\xi \I_\xi=\E_{\out}$.  A detector that resolves branch information at $\Iplus$ alters the unconditioned charged state and provides outcome-dependent feed-forward data. Conversely, a detector that erases exterior branch-identifying information preserves a greater degree of conditioned coherence. In both scenarios, the horizon factor persists as the complementary obstruction.

In a factorized coherent regime, the scalar branch-pair decomposition assumes the form
\begin{align}
       G_{ab}^{\tot}
       & = G_{ab}^{\det(\xi)}G_{ab}^{\Iplus,\unres}G_{ab}^{\hor},\label{eq:three_factor}\\
       G_{ab}^{\hor}
       & = \text{complementary horizon record},\nonumber\\
       G_{ab}^{\Iplus,\unres}
       & = \text{unresolved exterior record},\nonumber\\
       G_{ab}^{\det(\xi)}
       & = \text{conditioned detector record}.\nonumber
\end{align}
The first factor is schematic, as a sharp exterior measurement need not act as a scalar multiplier for every branch pair. This decomposition encapsulates the operational distinction: the exterior factors depend on the detector design and feed-forward protocol, whereas the horizon factor is intrinsic to the complementary output.

The central object remains the Schur channel
\begin{equation}
       (\E_{\hor}\rho)_{ab}=G^{\hor}_{ab}\rho_{ab},
       \qquad
       G^{\hor}_{ab}
       =
       \langle\Phi_b^{\hor}|\Phi_a^{\hor}\rangle .
       \label{eq:payload_schur}
\end{equation}
This expression enforces complete positivity on every finite branch code, constrains exterior erasure, and yields multi-branch Bargmann products.

\begin{proposition}[Exterior quantum-eraser bound]
Let $\I_\xi$ be any instrument acting exclusively on the exterior radiation algebra, incorporating classical feed-forward and any recovery map to Alice's charged system.  For two branch currents $a,b$, the maximum recoverable coherence following exterior conditioning is bounded by the fidelity of the complementary horizon outputs~\cite{Petz,AQEC}:
\begin{equation}
       \frac{|(\mathcal R_\xi\circ\I_\xi(\rho))_{ab}|}{|\rho_{ab}|}
       \leq
       F(\rho_a^{\hor},\rho_b^{\hor})^{1/2} .
       \label{eq:eraser_bound}
\end{equation}
In the coherent pure-state horizon regime, this reduces to
\begin{equation}
       V_{\rm rec}^{\rm ext}(a,b)\leq
       |\langle\Phi_b^{\hor}|\Phi_a^{\hor}\rangle|
       =\exp[-\N^{\hor}_{ab}/2] .
       \label{eq:coherent_eraser_bound}
\end{equation}
\end{proposition}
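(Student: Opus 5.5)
I will work in a Stinespring picture and reduce the statement to a standard fact about purifications. After the dressed evolution $U_{\dress}$ and purification of the initial radiation state, branch $a$ is mapped to a joint state of the form $|a\rangle\otimes|\Psi_a\rangle_{E X H}$. Here $E$ is the accessible exterior algebra, $X$ the unresolved exterior remainder, and $H$ the horizon output. Since $H$ is traced out in every operation Alice can perform, I will regard $E\otimes X$ together with Alice's system as the total accessible register. Any instrument $\I_\xi$ followed by a recovery $\mathcal R_\xi$ is then a CP map built from operations on that register alone, with classical feed-forward. For an input $\rho$ with coherence $\rho_{ab}$, the output off-diagonal entry is
\[
\sum_\xi(\mathcal R_\xi\circ\I_\xi(\rho))_{ab}=\rho_{ab}\,\Tr\bigl[\mathcal T_{ab}\bigr],
\]
where $\mathcal T_{ab}$ is the image of the cross operator $|\Psi_a\rangle\langle\Psi_b|$ under a map of the form $\mathrm{id}_H\otimes(\text{accessible operation})$, followed by extraction of the $|a\rangle\langle b|$ component. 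The left-hand side of Eq.~\eqref{eq:eraser_bound} is therefore the modulus of a matrix element $\langle\Psi_b|(\mathbf 1_H\otimes Y)|\Psi_a\rangle$. The operator $Y$ acts only on the accessible register and satisfies $\|Y\|\le1$, because the whole instrument-plus-recovery is a trace-nonincreasing CP map and the coherence extraction is a contraction.

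The key step is to bound $\sup_{\|Y\|\le1}|\langle\Psi_b|(\mathbf 1_H\otimes Y)|\Psi_a\rangle|$ by the horizon fidelity. This is exactly Uhlmann's theorem. The purifications $|\Psi_a\rangle$ and $|\Psi_b\rangle$ of $\rho_a^{\hor}$ and $\rho_b^{\hor}$ differ only by operations on the purifying (accessible) system. Writing $\mathbf 1\otimes Y$ as a convex combination of unitaries, or invoking the operator-norm version of Uhlmann, gives
\[
|\langle\Psi_b|(\mathbf 1\otimes Y)|\Psi_a\rangle|\le\max_{W}|\langle\Psi_b|(\mathbf 1\otimes W)|\Psi_a\rangle|=\|\sqrt{\rho_a^{\hor}}\sqrt{\rho_b^{\hor}}\|_1,
\]
which is $F^{1/2}$ in the squared-fidelity convention. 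Since $\|\rho\|$-normalized outputs are not renormalized per outcome, summing over $\xi$ is harmless: the triangle inequality together with $\sum_\xi \I_\xi=\E_{\out}$ keeps the total contraction bound. This yields Eq.~\eqref{eq:eraser_bound}, with the ratio $|\rho_{ab}|$ divided out.

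For the coherent regime, I will specialize with $\rho^{\hor}_a=|\Phi_a^{\hor}\rangle\langle\Phi_a^{\hor}|$. The root fidelity is then $|\langle\Phi_b^{\hor}|\Phi_a^{\hor}\rangle|$, and by Eq.~\eqref{eq:coherent_overlap_abs} this equals $\exp[-\tfrac12\|\alpha^{\hor}_a-\alpha^{\hor}_b\|^2]=\exp[-\N^{\hor}_{ab}/2]$. I identify $\N^{\hor}_{ab}$ with the horizon displacement norm, which is the expected entangling photon number, as in Eq.~\eqref{eq:projected_norms}.

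I expect the main obstacle to be justifying that $Y$ really acts only on the accessible register and is contractive. This requires two points. First, the commutation of $\A_{\Hplus}$ with $\A_{\Iplus}\vee\A_{\lab}$, which Eq.~\eqref{eq:rad_split} provides. Second, that classical feed-forward does not enlarge the class of operations, since feed-forward is itself a local instrument composed with a conditional channel. In the type-III algebraic setting, Uhlmann's theorem must be replaced by its von Neumann algebra version (Araki–Uhlmann fidelity), and I would cite that version rather than re-derive it.
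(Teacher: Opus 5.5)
Your proof is correct and agrees in substance with the paper's, but you reach the bound through a different key lemma. The paper's argument only names the mechanism. Exterior operations never touch $\rho_a^{\hor},\rho_b^{\hor}$, and it invokes monotonicity of fidelity plus the information--disturbance tradeoff for complementary channels, with the coherent case following from $F=|\langle\Phi_b^{\hor}|\Phi_a^{\hor}\rangle|^2$. You turn this into an explicit Stinespring computation instead. The recovered coefficient is $\langle\Psi_b|(\mathbf 1_H\otimes Y)|\Psi_a\rangle$ with $Y$ a contraction on the accessible register. Uhlmann's variational bound $\sup_{\|Y\|\le1}|\langle\Psi_b|(\mathbf 1_H\otimes Y)|\Psi_a\rangle|=\|\sqrt{\rho_a^{\hor}}\sqrt{\rho_b^{\hor}}\|_1$ then gives the root fidelity directly. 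That bound is essentially the lemma behind the complementary-channel tradeoff the paper cites. Your version makes three things visible that the paper leaves implicit: the exponent $1/2$, the need to keep instrument outputs unnormalized (postselected renormalization can exceed the bound), and the type-III extension via Araki--Uhlmann fidelity with $Y$ in the commutant of the horizon algebra. The paper's phrasing instead ties the result to AQEC/Petz recovery language, without writing out the inequality chain.

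Several steps need tightening.

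\begin{enumerate}
\item The identity $\sum_\xi(\mathcal R_\xi\circ\I_\xi(\rho))_{ab}=\rho_{ab}\Tr[\mathcal T_{ab}]$ holds only if Alice acts on the charged system in a branch-preserving way, i.e.\ branch-controlled operations or conditional phase corrections. The paper tacitly assumes this. A general recovery mixes in other $\rho_{cd}$, and a replacement channel would violate the literal ratio.
\item Your argument survives if you state the bound for the coefficient of $\rho_{ab}$. Take Kraus operators $K_k: S\otimes A\to S$ of the total accessible operation, with $A=E\otimes X$ and discarded registers absorbed into $k$, and set $L_k^{(c)}=\langle c|K_k(|c\rangle\otimes\,\cdot\,)$. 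Then that coefficient equals $\langle\Psi_b|(\mathbf 1_H\otimes Y)|\Psi_a\rangle$ with $Y=\sum_k L_k^{(b)\dagger}L_k^{(a)}$. Since $\sum_k L_k^{(c)\dagger}L_k^{(c)}\le\mathbf 1$, Cauchy--Schwarz gives $\|Y\|\le1$.
\item The triangle inequality over $\xi$ is not what controls the sum, because adding per-outcome bounds is too weak. The real point is that the feed-forward-corrected sum is a single trace-nonincreasing map. The joint Kraus index therefore already yields one contraction, exactly as in the previous item.
\item $|\Psi_a\rangle$ and $|\Psi_b\rangle$ do not ``differ only by operations on the purifier'' unless $\rho_a^{\hor}=\rho_b^{\hor}$. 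Uhlmann's inequality only needs a common purifying space, so drop that sentence.
\item Put the purifier of the initial radiation state on the accessible side. This can only strengthen Alice, so the bound stays valid.
\end{enumerate}
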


\begin{proof}[Argument]
Exterior measurements refine the $\Iplus$ output and condition phase corrections on the measured record; they do not, however, act on the complementary horizon density operators $\rho_a^{\hor}$ and $\rho_b^{\hor}$.  The monotonicity of fidelity under quantum channels, combined with the information-disturbance tradeoff for complementary channels, bounds the recovered coherence by the indistinguishability of the inaccessible output.  For pure coherent horizon states, the fidelity equals the squared overlap, yielding Eq.~\eqref{eq:coherent_eraser_bound}.
\end{proof}

The bound established in Eq.~\eqref{eq:coherent_eraser_bound} is the physical realization of the fundamental limits of approximate quantum error correction (AQEC)~\cite{AQEC}. In the language of quantum information, the exterior radiation acts as the syndrome ancilla, the feed-forward protocol constitutes the recovery map, and the complementary horizon output dictates the fundamental fidelity ceiling, analogous to the Petz recovery bound~\cite{Petz}.

This bound elevates the ``black hole as detector'' metaphor to a rigorous Stinespring dilation statement. An interior observer would interpret the identical unitary evolution as coherent information transfer. Alice, restricted to the exterior algebra, observes decoherence precisely because she discards the complementary output.

\subsection{Finite-time control and soft/hard scaling}

Finite-time minimization of horizon-induced decoherence is formulated as a channel-control problem~\cite{DKSW2025}.  Fix a horizon cut $C$ and assume that the branch-dependent radiation emitted prior to $C$ has already been generated.  The objective is to determine an admissible future control $u(t>C)$ for Alice's source that maximizes the final channel fidelity to the identity map on the branch subspace. Denoting the channel generated by the continuation protocol $\E_{u,C\to\infty}$, one obtains
\begin{equation}
       u_{\opt}
       =
       \operatorname*{arg\ max}_{u\in\mathcal U_{\admissible}}
       F_{\mathrm{ch}}\left(
          R_u\circ\E_{u,C\to\infty}\circ\E_{C},
          \Id
       \right),
       \label{eq:recovery_optimization}
\end{equation}
where $R_u$ represents a permissable exterior recovery operation.  Within the coherent-state sector, this optimization maximizes the final overlap of the horizon coherent state with the chosen reference branch, effectively constituting a purification problem for the radiation already emitted prior to $C$.

For a controlled protocol exhibiting branch coherence
\begin{equation}
       \rho_{ab}(t)=g_{ab}(t)\rho_{ab}(0),
       \qquad
       g_{ab}(t)=\exp[-\Gamma_{ab}(t)+\ii\phi_{ab}(t)],
\end{equation}
the complete positive (CP) divisibility of the dephasing channel on the branch subspace requires
\begin{equation}
       \dot{\Gamma}_{ab}(t)\geq0
       \quad\text{for every branch pair in every interval.}
       \label{eq:cp_div_condition}
\end{equation}
Whenever an admissible recovery protocol decreases the final horizon displacement norm over a given interval, the exterior reduced dynamics becomes non-CP-divisible within that interval. This provides a rigorous OQS interpretation of finite-time recovery: the horizon environment retains a memory of the source history rather than functioning as a memoryless absorber.

This same operational framework provides the soft/hard discriminator. Let 
 $\chi_\Lambda(\omega)$ denote a smooth resolution function defined with respect to the relevant horizon or detector time generator.  The horizon exponent decomposes as
\begin{equation}
       \N_{ab}^{\hor}
       =
       \N_{ab}^{\hor,<\Lambda}
       +
       \N_{ab}^{\hor,>\Lambda},
       \label{eq:soft_hard_split}
\end{equation}
with
\begin{align}
       \N_{ab}^{\hor,<\Lambda}
       & =
       \int_0^\infty \dd\omega\,\chi_\Lambda(\omega)
       \,\mathcal J_{ab}^{\hor}(\omega),\label{eq:soft_part}\\
       \N_{ab}^{\hor,>\Lambda}
       & =
       \int_0^\infty \dd\omega\,[1-\chi_\Lambda(\omega)]
       \,\mathcal J_{ab}^{\hor}(\omega).\label{eq:hard_part}
\end{align}
Here $\mathcal J_{ab}^{\hor}$ represents the horizon spectral density of the branch-current difference.  For smooth  switching on-and-off time $T_{\rm sw}$ and dwell time $T$, the leading-soft channel predicts the hierarchy
\begin{equation}
       \N^{\hor,<\Lambda}_{ab}\sim A_{ab}T,
       \qquad
       \N^{\hor,>\Lambda}_{ab}\sim B_{ab}T_{\rm sw}^{-p}
       \quad(p>0),
       \label{eq:soft_hard_scaling_result}
\end{equation}
within a parameter window where $T^{-1}\ll\Lambda\ll T_{\rm sw}^{-1}$ is parametrically resolved.  Equivalently,
\begin{equation}
       \N_{ab}^{\hor,<\Lambda}\sim\Gamma_{ab}^{\hor}T,
       \qquad
       \N_{ab}^{\hor,>\Lambda}
       =O(T_{\open}^{-p})+O(T_{\close}^{-p})+O(\text{nonadiabatic corrections}).
       \label{eq:adiabatic_hierarchy}
\end{equation}
Hard radiation also crosses the horizon; however, it fails this two-parameter scaling test. The invariant content of this bipartition is the positive quadratic form selected by the accessible algebra, the detector response, and the time generator, rather than the coordinate-dependent designation ``soft''. I express this dependence as
\begin{equation}
       \E_{\out}=\E_{\out}[\A_{\det},\chi,\Omega,J_a].
       \label{eq:channel_labeled}
\end{equation}

\subsection{Record geometry and material mimicry}

The channel representation reveals geometric structure extending beyond pairwise visibility.  For an $n$-branch superposition with horizon Gram matrix $G^{\hor}$, I define
\begin{equation}
       r_{\rm eff}(G^{\hor})
       =\frac{(\Tr G^{\hor})^2}{\Tr[(G^{\hor})^2]} .
       \label{eq:effective_rank}
\end{equation}
This effective rank quantifies the number of mutually distinguishable horizon records generated by the branch family.  In a continuum branch limit, $G^{\hor}(x,x')$ becomes a positive integral kernel on configuration space; its spectral decay characterizes the resolution of the horizon environment with respect to Alice's charge configurations.

The same kernel formalism also yields the material-mimicry criterion.  For two environments $E_1,E_2$ to exhibit equivalent decoherence over a chosen family of branch currents $\mathcal C=\{J_a\}$, the Gaussian linear-source regime requires the equality of their noise quadratic forms on all branch differences:
\begin{equation}
       \int (J_a-J_b)_\mu
       \left[N^{\mu\nu}_{E_1}-N^{\mu\nu}_{E_2}\right]
       (J_a-J_b)_\nu
       =0
       \quad\forall a,b\in\mathcal C .
       \label{eq:spectral_mimicry}
\end{equation}
Consequently, black-hole-like electromagnetic decoherence constitutes a universality class of infrared spectral density. Ordinary dissipative matter can reproduce a selected electromagnetic channel provided its low-frequency dipole fluctuations and absorption kernel match the black-hole kernel evaluated on Alice's branch differences~\cite{DSWLocal2025,BiggsMaldacena}. The gravitational analog is more restrictive, as it necessitates the corresponding quadrupolar response; nevertheless, the underlying principle remains spectral: the channel is determined by the low-frequency noise and response kernels, detector resolution, branch protocol, hard absorption, and dressing convention.

This section constitutes the quantum-channel-theoretic core of the paper. It translates the horizon overlap into a set of operational principles: the local and horizon descriptions yield the identical Gram entry; exterior measurements define quantum instruments rather than a unique decoherence parameter; recoverable coherence is bounded by the complementary horizon fidelity; finite-time recovery diagnoses non-Markovian channel control; soft and hard records are separated by dwell-time and switching-time scaling; and material mimicry is reduced to kernel matching. These statements form a unified reduced-channel framework, and the subsequent section introduces the multi-branch Bargmann holonomy as its most discriminating interferometric test.

\section{Concrete predictions and checks}

The open quantum system (OQS) formulation yields consistency checks that transcend mere restatements of two-branch horizon estimates. The primary consequence is a strict consistency condition: once branch-dependent horizon records are treated as vectors within an environmental Hilbert space, the corresponding collection of decoherence factors is no longer mutually independent. For any three branches $a,b,c$, the corresponding Gram matrix must be positive semidefinite,
\begin{equation}
       \det
       \begin{pmatrix}
       1&G_{ab}&G_{ac}\\
       G_{ba}&1&G_{bc}\\
       G_{ca}&G_{cb}&1
       \end{pmatrix}
       \ge0 .
       \label{eq:gram_inequality}
\end{equation}
This inequality is a direct consequence of the channel formalism, yet it remains obscured if one merely considers isolated pairwise visibility losses. It imposes rigorous constraints on phenomenological dephasing models and, in the context of an $n$-branch experiment, generalizes to  the requirement that the full matrix $G$ be positive semidefinite.

The second consequence is inherently phase-sensitive.  In a three-arm charged interferometer, the product
\begin{equation}
       \mathcal B_{123}=G_{12}G_{23}G_{31}
\end{equation}
is invariant under independent rephasings of the branch states. It therefore quantifies the holonomy of the reduced infrared channel over the selected branch triple, rather than measuring an arbitrary, convention-dependent pairwise phase. In the factorized leading-soft regime, the horizon contribution is given by
\begin{equation}
       \mathcal B^{H,(0)}_{123}=G^{H,(0)}_{12}G^{H,(0)}_{23}G^{H,(0)}_{31},
       \qquad
       \mathcal U^{H,(0)}_{123}=e^{\ii\Theta^{H,(0)}_{123}}.
\end{equation}
The phase $\Theta^{H,(0)}_{123}$ corresponds to the symplectic area~\eqref{eq:symplectic_area_holonomy} of the triangle formed by the horizon soft displacements. The operational tests for this phase are independent of the overall normalization of the two-branch visibility: reversing the ordering of the arms maps $\Theta\to-\Theta$, adding a common displacement to all arms leaves $\Theta$ invariant, and a scalar branch-phase model yields $\Theta=0$. Furthermore, the simultaneous modulus and phase data must also satisfy
\begin{equation}
       1
       +2r_{12}r_{23}r_{31}\cos\Theta^{H}_{123}
       -r_{12}^2-r_{23}^2-r_{31}^2
       \ge0,
       \qquad
       r_{ab}=|G^H_{ab}| .
       \label{eq:prediction_theta_bound}
\end{equation}
Consequently, the triangle holonomy yields two falsifiable predictions of the leading horizon channel: a null test against removable branch phases, and a completely positive (CP) determinant test against arbitrary pairwise-fit models.  A nontrivial value of $\mathcal B^H_{123}$ does not, by itself, constitute a soft-algebra associator; rather, it is a Bargmann invariant of the environmental Gram matrix, and thus an observable of the phase and visibility geometry induced by inaccessible horizon records.

This invariant can be measured in a concrete three-arm charged-qutrit experiment.  Alice prepares a charged mesoscopic particle in the path state
\begin{equation}
       |\psi_{\rm in}\rangle
       =
       \frac{1}{\sqrt 3}
       (|1\rangle+|2\rangle+|3\rangle)\otimes |0\rangle_{\gamma},
\end{equation}
where each branch corresponds semiclassically to a controlled current history $J_a^\mu(x)$.  The three wavepackets are adiabatically transported into trapping minima that form a small triangle of size $d$ at a radius $b\gg r_s$, held in this configuration for a duration $T$, and subsequently recombined.  In the clean factorized regime,
\begin{equation}
       \rho^{\rm out}_{ab}
       =
       \frac{1}{3}G^{\Iplus}_{ab}G^{\hor}_{ab}.
\end{equation}
Qutrit tomography performed after recombination reconstructs the off-diagonal entries, and the experimentally accessible cyclic quantity is
\begin{equation}
       \mathcal B_{\rm exp}
       =
       \frac{\rho_{12}\rho_{23}\rho_{31}}{\rho_{11}\rho_{22}\rho_{33}},
       \label{eq:Bexp}
\end{equation}
which reduces to $\mathcal B_{123}$ in the case of equal populations.  If Alice additionally performs idealized exterior soft-photon monitoring and applies the corresponding feed-forward erasure on the accessible $\Iplus$ record, the residual cyclic visibility equals
\begin{equation}
       \mathcal B^{\rm residual}_{\rm exp}=\mathcal B^H_{123}
\end{equation}
in the factorized regime.  Thus the modulus and phase of the residual cyclic interference serve as an operational witness of the horizon complementary horizon channel.

\begin{figure}[t]
\centering
\resizebox{0.96\textwidth}{!}{%
\begin{tikzpicture}[every node/.style={font=\scriptsize}, >=Stealth]
  \fill[black!82] (0,0) circle (0.70);
  \draw[thick] (0,0) circle (0.70);
  \node[text=white] at (0,0) {BH};
  \draw[thick, dashed] (0.86,1.45) arc[start angle=90,end angle=-90,x radius=0.55,y radius=1.45];
  \node[left] at (-0.88,0) {$\Hplus$};

  \draw[thick, rounded corners=13pt] (4.30,-2.35) rectangle (11.35,2.35);
  \node at (8.05,2.66) {$\Iplus$ detector shell};

  \draw[fill=white, thick] (7.30,-0.25) rectangle (8.05,0.25);
  \node at (7.68,-0.5) {tritter};
  \draw[fill=white, thick] (9.45,-0.32) rectangle (10.35,0.32);
  \node at (9.90,-0.5) {tomography};
  \draw[->, thick] (5.35,0) -- (7.30,0);
  \draw[->, thick] (8.05,0) -- (9.45,0);
  \draw[->, thick] (10.35,0) -- (11.10,0);

  \coordinate (A1) at (4.95,1.35);
  \coordinate (A2) at (4.78,0.00);
  \coordinate (A3) at (4.95,-1.35);

  \draw[thick] (7.30,0) .. controls (6.50,0.34) and (5.70,0.95) .. (A1);
  \draw[thick] (7.30,0) .. controls (6.50,0.05) and (5.65,0.00) .. (A2);
  \draw[thick] (7.30,0) .. controls (6.50,-0.34) and (5.70,-0.95) .. (A3);
  \draw[thick] (A1) .. controls (5.75,0.95) and (6.60,0.34) .. (8.05,0);
  \draw[thick] (A2) .. controls (5.65,0.00) and (6.50,0.00) .. (8.05,0);
  \draw[thick] (A3) .. controls (5.75,-0.95) and (6.60,-0.34) .. (8.05,0);

  \filldraw[fill=white, thick] (A1) circle (0.07);
  \filldraw[fill=white, thick] (A2) circle (0.07);
  \filldraw[fill=white, thick] (A3) circle (0.07);
  \draw[densely dotted] (A1)--(A2)--(A3)--cycle;

  \draw[->, decorate, decoration={snake,amplitude=0.38mm,segment length=2.35mm}] (5.05,1.35) .. controls (6.50,1.90) and (8.45,1.92) .. (10.95,1.82);
  \draw[->, decorate, decoration={snake,amplitude=0.38mm,segment length=2.35mm}] (4.95,0.05) .. controls (6.55,0.65) and (8.55,0.88) .. (10.95,0.80);
  \draw[->, decorate, decoration={snake,amplitude=0.38mm,segment length=2.35mm}] (5.05,-1.35) .. controls (6.60,-0.72) and (8.55,-1.05) .. (10.95,-1.58);
  \node[right] at (10.50,1.4) {$\Phi^{\Iplus}$};

  \draw[->, decorate, decoration={snake,amplitude=0.32mm,segment length=2.1mm}] (4.72,1.25) .. controls (3.25,1.20) and (1.80,0.82) .. (0.88,0.64);
  \draw[->, decorate, decoration={snake,amplitude=0.32mm,segment length=2.1mm}] (4.55,0.00) .. controls (3.18,0.10) and (1.78,0.13) .. (0.88,0.04);
  \draw[->, decorate, decoration={snake,amplitude=0.32mm,segment length=2.1mm}] (4.72,-1.25) .. controls (3.25,-1.05) and (1.80,-0.66) .. (0.88,-0.52);
  \node[above] at (2.7,1.4) {$\Phi^H$};

  \node[draw, rounded corners, fill=white, align=center] at (8.20,-2.00)
      {$\displaystyle \mathcal B_{\rm exp}=
       \rho_{12}\rho_{23}\rho_{31}/(\rho_{11}\rho_{22}\rho_{33})$};
\end{tikzpicture}%
}
\caption{Minimal three-arm charged-qutrit interferometer designed to measure the Bargmann invariant.  A charged particle is split by a tritter into three adiabatically controlled current histories $J_1,J_2,J_3$ localized at a small triangular separation $d$ outside the black hole, and subsequently recombined for qutrit tomography.  The wavy lines propagating to the detector shell represent accessible exterior soft records at $\mathscr I^+$; the wavy lines crossing the dashed horizon represent the complementary horizon records.  The reconstructed cyclic invariant is $\mathcal B_{\rm exp}=\rho_{12}\rho_{23}\rho_{31}/(\rho_{11}\rho_{22}\rho_{33})$.  Under ideal exterior monitoring and feed-forward erasure, the residual invariant is $\mathcal B_{\rm exp}^{\rm residual}=\mathcal B^{H,(0)}_{123}$, or equivalently $\mathcal B^H_{123}$ with the leading superscript suppressed, within the factorized regime.}
\label{fig:qutrit_bargmann_setup}
\end{figure}

This channel formulation also rigorously distinguishes between conditioned and unconditioned visibility. Exterior radiation records may be ignored, measured, postselected, or utilized for feed-forward correction, and these distinct choices need not yield identical interference patterns. In the coherent Gaussian regime, the difference is computed from the conditional covariance of the unresolved exterior radiation. The component that rigorously constitutes horizon-induced decoherence is that which remains invariant under alterations to the exterior measurement basis and the recovery procedure.

The formulation also provides a covariance check regarding the designation ``soft.'' Rindler and inertial descriptions must yield the identical final reduced channel once the detector algebra is appropriately transformed, notwithstanding the fact that they decompose the same field into distinct frequency sectors. What varies is the spectral representation of the kernel; the channel itself remains the invariant object. This constitutes the rigorous framework for treating the observer dependence exposed by Killing-horizon examples.

Rotating and charged black holes are naturally accommodated within the same formalism~\cite{GrallaWei,LiRN}. Their effect on charged-sector coherence is characterized by the rank, norm, and degeneracies of the horizon Gram kernel. If an electromagnetic Meissner-like effect suppresses the relevant horizon response in an extremal limit, the OQS formulation asserts not merely that a radiation rate vanishes, but rather that $G^{\hor}_{ab}\to1$ for the corresponding branch-current differences. Finally, a material body mimics the electromagnetic black-hole channel precisely when its low-frequency noise kernel, projected onto Alice's branch-current differences, coincides with the black-hole kernel over the relevant bandwidth. The analogue problem is therefore fundamentally a spectral-kernel matching problem, rather than a search for objects that are superficially analogous to a horizon.

\section{Conclusion}

This paper formulates horizon-induced charged-branch decoherence as the leading-soft sector of inclusive QED restricted to an exterior algebra. Long-range gauge fields propagate branch-identifying information into independent radiative records. The horizon provides a complementary output for a portion of these records, while the fixed-history infrared identity   $\F[J,J]=1$  remains exact. The underlying mechanism is the residual unequal-history Feynman--Vernon/Bloch--Nordsieck factor that remains inaccessible to exterior operations.

The construction commences with an initial product state between the dressed charged branch sector and the independent radiative field, while permitting an arbitrary coherent branch density matrix. The leading soft theorem determines the eikonal displacement generated by the branch currents. Projecting this displacement onto the horizon algebra yields
\begin{equation}
       G_{ab}^{H,(0)}
       =
       \langle\Gamma_b^{H,(0)}|\Gamma_a^{H,(0)}\rangle,
       \qquad
       (\E_H^{(0)}\rho)_{ab}=G_{ab}^{H,(0)}\rho_{ab} .
\end{equation}
This identical Gram entry manifests equivalently as a horizon coherent-state overlap, a horizon-projected one-particle norm, and a local low-frequency noise kernel smeared with the branch-current difference. This equivalence identifies the unified channel underlying both the original horizon-overlap framework and the local-noise formulation.

The black hole constitutes the complementary output of the leading infrared QED channel. Relative to the flat-space soft-QED channel, the key distinction is the causal bipartition of the infrared bath into exterior and horizon records. This framework establishes several operational diagnostics: an exterior quantum-eraser bound, completely positive (CP) divisibility tests for finite-time recovery, multi-branch Gram constraints, soft/hard scaling relations, spectral mimicry criteria, and a charged-qutrit interferometer designed to measure the leading horizon Bargmann invariant.

The triangular holonomy yields the most discriminating prediction of this work. The cyclic product $\mathcal B^H_{123}=G^H_{12}G^H_{23}G^H_{31}$ quantifies a rephasing-invariant symplectic area within the horizon soft phase space. Its properties?orientation reversal, common-mode invariance, the four-branch triangulation identity, and the CP determinant bound~\eqref{eq:theta_gram_bound}?render the leading-soft channel a rigorously falsifiable multi-branch structure. Pairwise visibility fits that violate these identities are invalidated as legitimate horizon Schur channels.

The present analysis remains restricted to the leading soft order. The subleading Low--Burnett--Kroll channel, wherein angular-momentum, spin, and dipole-like records supersede the eikonal current record, constitutes the subject of a separate sequel. Furthermore,  Appendix~\ref{app:asrd} outlines a second avenue for future investigation: exact asymptotic completion recasts the semiclassical horizon record as an asymptotic soft-record decoding problem governed by an encoder, a decoder class, and a target invariant. Both extensions build upon the foundational result established herein: the leading-soft horizon OQS channel and its multi-branch holonomy test.

\section*{Acknowledgement}
Part of this work was carried out while I was participating workshops at the Institute for Pure \& Applied Mathematics (IPAM, USA) and the Simons Institute for the Theory of Computing (SIfTC, USA). This work was supported in part by the U.S. National Science Foundation and the Simons Foundation, by the National Research Foundation of Korea (NRF) (RS-2021-NR060112), and by research funds from Kwangwoon University.

\appendix

\section{Asymptotic soft-record decoding}\label{app:asrd}

This appendix recasts the question of exact-completion as a finite decoding problem~\cite{AQEC}.  The main text establishes a semiclassical statement: the exterior algebra traces out the horizon soft record, thereby discarding the off-diagonal factor $G^H_{ab}$.  An exact theory of quantum gravity can recast this statement as a decoding problem only upon specifying an exact asymptotic encoding map, an input code, an access model, and a decoder class.  This operational organization follows the channel-capacity perspective emphasized by Hayden and Wang in their analysis of the Bekenstein bound~\cite{HaydenWangBekenstein}.  I employ this framework here to isolate the component of the problem that pertains to quantum complexity theory.

Throughout this appendix, the horizon invariants are understood to be leading/eikonal invariants, and the superscript $(0)$ is suppressed.  Let $\mathcal C_n=\{0,1\}^n$ denote a finite branch code.  For each $x\in\mathcal C_n$ let $J_x$ denote an efficiently generated branch current, and let the semiclassical leading-soft horizon channel assign a coherent horizon record $|\Phi_x^H\rangle$.  The target Gram entries and cyclic invariants are
\begin{equation}
       G^H_{xy}=\langle\Phi_y^H|\Phi_x^H\rangle,
       \qquad
       \mathcal B^H_{xyz}=G^H_{xy}G^H_{yz}G^H_{zx}.
       \label{eq:asrd_targets}
\end{equation}
An exact completion supplies a family of asymptotic encoding maps
\begin{equation}
       \Omega_n^{\rm exact}:
       \rho_{\rm br}\longmapsto
       \omega^{\Iplus,{
m exact}}_{\rho,n},
       \label{eq:exact_asymptotic_encoding_map_v6}
\end{equation}
where the output is the exact final state on the full future-null-infinity algebra.  Equation~\eqref{eq:exact_asymptotic_encoding_map_v6} constitutes an additional input.  The semiclassical calculation fixes the invariant in Eq.~\eqref{eq:asrd_targets}; exact completion supplies the data from which a decoder attempts to recover it.

\begin{problem}[Pair ASRD]
The input comprises of branch labels $x,y\in\mathcal C_n$, accuracy parameters $\epsilon,\delta$, a classical description of the current family $\{J_x\}$, and an access model for $\Omega_n^{\rm exact}$.  The task is to compute an estimate $\widehat G_{xy}$ such that
\begin{equation}
       \Pr\bigl[|\widehat G_{xy}-G^H_{xy}|>\epsilon\bigr]\leq\delta .
\end{equation}
\end{problem}

\begin{problem}[Triangle ASRD]
The input comprises of $x,y,z\in\mathcal C_n$ along with the same accuracy parameters and access model.  The task is to compute an estimate $\widehat{\mathcal B}_{xyz}$ such that
\begin{equation}
       \Pr\bigl[|\widehat{\mathcal B}_{xyz}-\mathcal B^H_{xyz}|>\epsilon\bigr]\leq\delta .
\end{equation}
Equivalently, this entails estimating the holonomy phase
\begin{equation}
       \Theta^H_{xyz}=\arg \mathcal B^H_{xyz}.
\end{equation}
\end{problem}

These formulations distinguish between two concepts that are frequently conflated. Information-theoretic completeness asserts that the exact asymptotic state contains a representation of the semiclassical horizon record. The resource-bounded ASRD problem, conversely, asks whether a specified decoder class $\mathfrak D_n$ can extract the invariant using polynomial resources.

\begin{theorem}[$\mathsf{BQP}$ upper bound with coherent circuit access]\label{thm:bqp_upper_asrd}
Assume that $\Omega_n^{\rm exact}$ admits a polynomial-size coherent circuit representation
\begin{equation}
       U_\Omega |x\rangle|0\rangle
       =|x\rangle|\Psi_x^{\Iplus}\rangle
\end{equation}
and that controlled access to $U_\Omega$ and $U_\Omega^\dagger$ is available.  Pair ASRD problem for exact asymptotic overlaps lies in $\mathsf{BQP}$.  The triangle ASRD problem also lies in $\mathsf{BQP}$, requiring only three overlap-estimation calls and polynomial-time classical post-processing.
\end{theorem}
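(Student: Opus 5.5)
The plan is to reduce both problems to the standard Hadamard test. Given controlled access to $U_\Omega$ and $U_\Omega^\dagger$, I would prepare the state
\[
\frac{1}{\sqrt2}\bigl(|0\rangle|x\rangle|0\rangle+|1\rangle|y\rangle|0\rangle\bigr).
\]
Preparing this superposition of two classical $n$-bit labels costs $O(n)$ gates: one controlled-NOT per bit where $x$ and $y$ differ. I would then apply $U_\Omega$ on the label--output register, which produces
\[
\frac{1}{\sqrt2}\bigl(|0\rangle|x\rangle|\Psi_x\rangle+|1\rangle|y\rangle|\Psi_y\rangle\bigr).
\]

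The label register now carries which-branch information, so it has to be erased before the ancilla can interfere. The plan is to uncompute it with a controlled $X^{x\oplus y}$ pattern conditioned on the ancilla, mapping $|1\rangle|y\rangle$ to $|1\rangle|x\rangle$. The resulting state is
\[
\frac{1}{\sqrt2}|x\rangle\bigl(|0\rangle|\Psi_x\rangle+|1\rangle|\Psi_y\rangle\bigr).
\]
Measuring the ancilla in the $X$ and $Y$ bases (Hadamard, or $S^\dagger$ followed by Hadamard) gives $\pm1$ outcomes with means $\operatorname{Re}\langle\Psi_x|\Psi_y\rangle$ and $\operatorname{Im}\langle\Psi_x|\Psi_y\rangle$, where the sign convention fixes which conjugate is obtained. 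Combining the two yields the target overlap $G_{xy}=\langle\Psi_y|\Psi_x\rangle$. Each run uses $\mathrm{poly}(n)$ gates, since $U_\Omega$ is poly-size. By Hoeffding, $O(\epsilon^{-2}\log\delta^{-1})$ repetitions per quadrature achieve additive error $\epsilon$ with confidence $1-\delta$. Polynomial repetition count with inverse-polynomial $\epsilon$ and constant or inverse-exponential $\delta$ then places pair ASRD in $\mathsf{BQP}$ in the usual promise/estimation sense.

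For the triangle problem, I would run the pair estimator three times, on $(x,y)$, $(y,z)$ and $(z,x)$, each to accuracy $\epsilon'$ and failure probability $\delta/3$, and output the product of the three estimates. The error propagates through the telescoping identity $\widehat a\widehat b\widehat c-abc=(\widehat a-a)\widehat b\widehat c+a(\widehat b-b)\widehat c+ab(\widehat c-c)$. Since all true overlaps have modulus at most $1$ and the estimates can be clipped to the unit disk, this gives error at most $3\epsilon'$, so choosing $\epsilon'=\epsilon/3$ suffices. A union bound controls the overall failure probability. This uses exactly three overlap-estimation calls plus constant classical post-processing, matching the statement.

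The main obstacle is the identification of the estimated quantity with the target. The Hadamard test returns the exact asymptotic overlap $\langle\Psi_y^{\Iplus}|\Psi_x^{\Iplus}\rangle$, whereas the problems as posed ask for $G^H_{xy}$. I would therefore read the theorem, as its wording ``for exact asymptotic overlaps'' indicates, as estimating the exact overlaps, with the identification with $G^H$ treated as part of the exact-completion input assumption rather than something proved here.

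A secondary issue is the phase $\Theta^H_{xyz}=\arg\mathcal B$. It is well conditioned only when $|\mathcal B|$ is bounded below by an inverse polynomial, so I would state phase estimation under that promise, using $|\Delta\arg|\lesssim\epsilon/|\mathcal B|$.
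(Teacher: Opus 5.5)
Your proposal is correct and follows the paper's route. The paper likewise runs a Hadamard test on a control qubit after applying $U_\Omega$ to the label superposition, uses $O(\epsilon^{-2}\log\delta^{-1})$ repetitions, and multiplies three pair estimates on $(x,y)$, $(y,z)$, $(z,x)$ to get the cyclic invariant; your explicit uncomputation of the label register and your telescoping error bound with a union bound make it tighter, since the paper's sketch omits the uncomputation and without it the control qubit retains no coherence when $x\neq y$.
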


\begin{proof}
Prepare the state
\begin{equation}
       \frac{|0\rangle|x\rangle+|1\rangle|y\rangle}{\sqrt2}|0\rangle
\end{equation}
and apply controlled unitary $U_\Omega$.  A Hadamard test on the control qubit estimates the real and imaginary parts of
\begin{equation}
       \langle\Psi_y^{\Iplus}|\Psi_x^{\Iplus}\rangle
\end{equation}
with additive error $\epsilon$, requiring $O(\epsilon^{-2}\log\delta^{-1})$ repetitions, or achieves this via standard amplitude-estimation refinements when coherent repetition is permitted.  Applying this procedure to  the pairs $(x,y)$, $(y,z)$, and $(z,x)$ yields three complex estimates. Their product yields $\mathcal B_{xyz}$ with polynomial overhead, provided the individual errors are reduced by a constant factor.  The overall computation therefore lies in $\mathsf{BQP}$ under this access model.
\end{proof}

\begin{theorem}[Copy access estimates visibility but not holonomy phase]\label{thm:copy_upper_asrd}
Assume that the access model provides independent copies of $|\Psi_x^{\Iplus}\rangle$ and $|\Psi_y^{\Iplus}\rangle$ but does not permit the coherent preparation of their relative phase.  A swap test estimates
\begin{equation}
       |\langle\Psi_y^{\Iplus}|\Psi_x^{\Iplus}\rangle|^2
\end{equation}
with sample complexity $O(\epsilon^{-2}\log\delta^{-1})$.  The complex phase of $G^H_{xy}$, and hence the triangle holonomy, cannot be determined from copy access alone.
\end{theorem}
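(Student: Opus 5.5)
The statement has two parts, and I will prove them separately: an achievability claim for the swap test, and an impossibility claim for the phase.

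\emph{Achievability.} I will use the standard swap test on one copy of $|\Psi_x^{\Iplus}\rangle$ and one copy of $|\Psi_y^{\Iplus}\rangle$. Prepare an ancilla in $|+\rangle$, apply a controlled-SWAP between the two registers, apply a Hadamard to the ancilla, and measure it. A direct computation gives
\begin{equation*}
\Pr[0]=\tfrac12\bigl(1+|\langle\Psi_y^{\Iplus}|\Psi_x^{\Iplus}\rangle|^2\bigr).
\end{equation*}
Each trial is therefore an i.i.d.\ Bernoulli variable whose mean is an affine function of the target. Hoeffding's inequality then gives additive accuracy $\epsilon$ with failure probability at most $\delta$ after $O(\epsilon^{-2}\log\delta^{-1})$ copy pairs. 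Each trial uses only a polynomial number of gates, since the registers come from the access model.

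\emph{Impossibility: invariance of the data.} The key observation is that copy access does not fix the phase of each copy. A global phase cannot be measured, so the state $|\Psi_x^{\Iplus}\rangle^{\otimes k}\otimes|\Psi_y^{\Iplus}\rangle^{\otimes m}$, viewed as a density operator, is unchanged under the replacements
\begin{equation*}
|\Psi_x^{\Iplus}\rangle\mapsto e^{\ii\varphi_x}|\Psi_x^{\Iplus}\rangle,\qquad |\Psi_y^{\Iplus}\rangle\mapsto e^{\ii\varphi_y}|\Psi_y^{\Iplus}\rangle.
\end{equation*}
Under these replacements the overlap transforms as
\begin{equation*}
G_{xy}\mapsto e^{\ii(\varphi_x-\varphi_y)}G_{xy}.
\end{equation*}
Every POVM outcome distribution on any number of copies, including adaptive protocols and joint measurements, depends only on the density operators. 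It is therefore identical for every choice of $\varphi_x,\varphi_y$. Taking $\varphi_x-\varphi_y=\pi$, I obtain two instances with the same data whose values of $G_{xy}$ differ by $2|G_{xy}|$. Whenever $|G_{xy}|>\epsilon$, no estimator can succeed with probability above $1/2$ on both instances, so the phase of $G^H_{xy}$ is not determined.

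\emph{Impossibility: the triangle holonomy.} Here I must be careful. The Bargmann product is invariant under independent rephasings, as stated in the Definition of triangular channel holonomy, so the rephasing argument above gives nothing for $\mathcal B_{xyz}$ by itself. This is the main obstacle. Indeed, if the access model supplied copies of all three states jointly, the cyclic-shift test $\Tr(\rho_x\rho_y\rho_z)=\mathcal B_{xyz}$ would estimate the holonomy. I therefore plan to interpret the hypothesis literally: copy access is \emph{pairwise}, meaning each query returns copies of only two of the states, as in the statement. I will then show that $\Theta$ is not a function of the pairwise data. The two-state density data for a pair $\{x,y\}$ are unitarily determined by the modulus $|G_{xy}|$ alone, so the pairwise data amount to the three moduli $r_{xy},r_{yz},r_{zx}$. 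I will exhibit two triples of coherent displacements with identical side lengths $\|\alpha_a-\alpha_b\|$ but opposite oriented symplectic area, for example a triangle and its mirror image under complex conjugation of the one-particle space. By Eq.~\eqref{eq:symplectic_area_holonomy} and Eq.~\eqref{eq:orientation_reversal_holonomy} these have holonomies $\Theta$ and $-\Theta$. Pairwise copy access cannot distinguish them, so the holonomy is not determined. I will close with a remark that if joint three-state copy access is permitted, the statement must be weakened to exclude it.
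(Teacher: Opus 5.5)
\textbf{Agreement with the paper.} Your swap-test bound and your rephasing argument for the phase of $G_{xy}$ are the paper's proof; the paper uses a Chernoff bound where you use Hoeffding.

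\textbf{Where the paper's holonomy claim fails.} For the holonomy, the paper offers only that same rephasing argument plus the assertion that a coherent reference is needed. You are right that this does not follow, because $\mathcal B_{xyz}$ is rephasing-invariant. In fact $\Tr(\rho_x\rho_y\rho_z)=\overline{\mathcal B_{xyz}}$ under the convention $G_{ab}=\langle\Phi_b|\Phi_a\rangle$. (You wrote $\mathcal B_{xyz}$; that slip is harmless.) So $\mathcal B_{xyz}$ is a function of the density operators, and joint copies determine it via the cycle test.

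\textbf{The gap in your repair.} Your argument breaks at the step ``the pairwise data amount to the three moduli.'' It is true that $(\rho_x,\rho_y)$ is fixed up to a unitary by $|G_{xy}|$. But this limits what you can learn only if your measurement statistics are invariant under that unitary. For copies delivered into a fixed laboratory frame they are not:
\begin{itemize}
\item the query $\{x,y\}$ lets you tomograph $\rho_x$ and $\rho_y$;
\item the query $\{y,z\}$ lets you tomograph $\rho_z$, in the same frame;
\item $\Tr(\rho_x\rho_y\rho_z)$ is then computed classically. Alternatively, if copies may be stored between queries, the cycle test can be run directly.
\end{itemize}

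Your own mirror example shows this concretely. The state $|\alpha_a\rangle\langle\alpha_a|$ determines $\alpha_a$, so $\{\alpha_a\}$ and $\{\bar\alpha_a\}$ give different single-copy states. Moreover, $\Theta$ is an explicit function of the $\alpha_a$ through Eq.~\eqref{eq:symplectic_area_holonomy}.

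The unitaries relating each pair to its conjugate are pair-dependent. No single unitary relates the two triples when $\Theta\neq0,\pi$: the map relating them is antiunitary, while $\mathcal B$ is unitarily invariant. Hence even a common but unknown frame would not hide $\Theta$.

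\textbf{What is needed to close the argument.} You need a hypothesis that is not in the statement. One option is that each pair query is delivered in its own unknown frame, chosen independently of the other queries. Another is to restrict the claim to unitarily invariant pairwise statistics such as swap tests. Under such a hypothesis:
\begin{itemize}
\item the two instances can be made to deliver literally identical states, query by query;
\item the data then depend only on $(r_{xy},r_{yz},r_{zx})$;
\item your mirror-triangle construction finishes the proof.
\end{itemize}
In that construction, $\Theta\to-\Theta$ follows from $\operatorname{Im}\langle\bar\beta,\bar\alpha\rangle=-\operatorname{Im}\langle\beta,\alpha\rangle$, not from the arm-reordering identity~\eqref{eq:orientation_reversal_holonomy}. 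Without a frame restriction of this kind, the holonomy claim is false even under pairwise access. The statement therefore needs a stronger qualification than the one you propose.
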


\begin{proof}
The swap test accepts with probability
\begin{equation}
       p_{\rm acc}=\frac12\left(1+|\langle\Psi_y^{\Iplus}|\Psi_x^{\Iplus}\rangle|^2\right).
\end{equation}
Standard Chernoff bounds yield the stated sample complexity.  Multiplying either state by an unknown branch-dependent phase leaves the corresponding density matrix $|\Psi_x\rangle\langle\Psi_x|$ invariant, thereby preserving all copy-access statistics, while simultaneously altering the complex overlap phase.  Recovering the Bargmann phase necessitates a coherent reference, an interferometric branch preparation, or an equivalent phase-sensitive access model.
\end{proof}

\begin{problem}[Gram feasibility]
Given a set of approximate complex numbers $\widetilde G_{ab}$ for $a,b\in\{1,\ldots,m\}$ and an accuracy parameter $\eta$, determine whether they are consistent with a physical Schur channel, up to tolerance $\eta$.
\end{problem}

\begin{theorem}[Polynomial-time Gram and holonomy certification]\label{thm:gram_cert_asrd}
The Gram feasibility for a fixed tolerance reduces to a classical polynomial-time positive-semidefinite test.  For $m=3$, defining $r_{ab}=|G_{ab}|$ and $\Theta_{123}=\arg(G_{12}G_{23}G_{31})$, complete positivity requires
\begin{equation}
       1+2r_{12}r_{23}r_{31}\cos\Theta_{123}
       -r_{12}^2-r_{23}^2-r_{31}^2\geq0 .
       \label{eq:asrd_three_branch_gram_bound}
\end{equation}
The orientation and four-branch holonomy identities
\begin{equation}
       \mathcal U_{132}=\mathcal U_{123}^{-1},
       \qquad
       \mathcal U_{123}\mathcal U_{134}=\mathcal U_{124}\mathcal U_{234}
       \label{eq:asrd_holonomy_verifier}
\end{equation}
also constitute polynomial-time checks once the complex Gram entries are provided. 
\end{theorem}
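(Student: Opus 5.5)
The plan is to treat the three claims of the theorem separately: the general Gram feasibility test, the explicit $m=3$ inequality, and the holonomy identities. For the first, I would use the Gram positivity proposition. A physical Schur channel on $m$ branches is exactly $\rho\mapsto G\circ\rho$ with $G$ positive semidefinite and $G_{aa}=1$. Given the data $\widetilde G_{ab}$, I would first symmetrize, setting $\widetilde G_{ba}:=\overline{\widetilde G_{ab}}$ if both are supplied; if they disagree by more than $\eta$, I reject. Feasibility up to tolerance $\eta$ then means deciding whether some Hermitian $G$ with unit diagonal and $G\succeq0$ lies within $\eta$ of $\widetilde G$ entrywise.

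A simple sufficient certificate is $\lambda_{\min}(\widetilde G)\ge -c\,m\eta$. The reason is that adding a multiple of the identity and renormalizing the diagonal perturbs the entries by $O(m\eta)$, while Weyl's inequality controls the change in $\lambda_{\min}$ under a perturbation of Frobenius norm at most $m\eta$. Computing $\lambda_{\min}$ to fixed precision, for example via a Cholesky factorization of $\widetilde G+\tau\mathbf 1$, costs $O(m^3)$ arithmetic operations. The exact nearest-matrix version is a small semidefinite program, solvable in polynomial time to fixed accuracy by the ellipsoid or interior-point method. I expect the main obstacle to be pinning down what ``up to tolerance $\eta$'' means. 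My plan is to adopt a promise-problem formulation: accept if the data are within $\eta$ of a feasible Gram matrix, reject if they are at distance at least $2\eta$. Under this gap, the SDP or eigenvalue test separates the two cases without any issue of exact real arithmetic.

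For $m=3$, I would apply Sylvester's criterion for positive semidefinite matrices. This requires all principal minors to be nonnegative, not just the leading ones. The $1\times1$ minors equal $1$. The $2\times2$ minors are $1-r_{ab}^2\ge0$. The full determinant expands to
\[
1+G_{12}G_{23}G_{31}+\overline{G_{12}G_{23}G_{31}}-r_{12}^2-r_{23}^2-r_{31}^2 ,
\]
and since $G_{12}G_{23}G_{31}=r_{12}r_{23}r_{31}e^{\ii\Theta_{123}}$, this is exactly the expression in \eqref{eq:asrd_three_branch_gram_bound}. This gives the necessity claimed in the theorem, and it agrees with \eqref{eq:theta_gram_bound}.

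For the holonomy identities, I would use Hermiticity of $G$, $G_{ba}=\overline{G_{ab}}$. This gives $\mathcal B_{132}=G_{13}G_{32}G_{21}=\overline{\mathcal B_{123}}$, and hence $\mathcal U_{132}=\mathcal U_{123}^{-1}$. For the four-branch identity, I would not invoke the symplectic-area argument, since that applies only to the coherent model. Instead I would compute directly: $\mathcal B_{123}\mathcal B_{134}$ and $\mathcal B_{124}\mathcal B_{234}$ differ by factors of the form $G_{ab}G_{ba}=|G_{ab}|^2>0$. Therefore the normalized phases agree whenever all overlaps involved are nonzero, and the identity holds for a general Gram matrix. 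Each check is then a constant number of complex multiplications per tuple, so checking all triples and quadruples costs $O(m^3)$ and $O(m^4)$ operations respectively. With fixed-precision inputs, the comparisons are carried out to within a tolerance derived from $\eta$ by error propagation through the products; this propagation is routine.
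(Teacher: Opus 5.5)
Your proof is correct and follows the paper's three steps. These are: complete positivity of the Schur map is equivalent to $G\succeq0$ with $G_{aa}=1$; a polynomial-time PSD test decides feasibility; expanding the $3\times3$ Hermitian determinant gives $1+2\,\mathrm{Re}\,\mathcal B_{123}-r_{12}^2-r_{23}^2-r_{31}^2$; and the holonomy checks cost a constant number of complex multiplications per tuple.

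The genuine difference is in the holonomy part. The paper's proof of this theorem only notes that the identities are cheap to check. Why they hold is left to the earlier coherent-state argument, in which the quadrilateral's symplectic area is independent of triangulation. You instead prove them from Hermiticity alone, via $\mathcal B_{123}\mathcal B_{134}/|G_{13}|^2=G_{12}G_{23}G_{34}G_{41}=\mathcal B_{124}\mathcal B_{234}/|G_{24}|^2$. This buys generality: both identities hold for every Gram matrix with nonvanishing entries, not only in the coherent leading-soft model. It also reveals that, once you have symmetrized the data, these two checks cannot fail. As certificates they therefore test only Hermiticity and the consistency of pairwise entries measured in different triples. The paper's geometric route buys a physical interpretation rather than extra constraints.

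One slip in the feasibility step: you have the direction of the eigenvalue certificate reversed. With entrywise tolerance $\eta$, Weyl's inequality plus a row-sum bound shows that $\lambda_{\min}(\widetilde G)\ge -m\eta$ is \emph{necessary} for $\widetilde G$ to be $\eta$-close to a unit-diagonal PSD matrix. Shifting by $\epsilon\mathbf 1$ and rescaling by $(1+\epsilon)^{-1}$ shows only that $\lambda_{\min}\ge-\epsilon$ \emph{certifies} closeness within $\epsilon$.

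Both extremes are attained. For the first, take $ss^\dagger$ with $|s_a|=1$ and perturb it by $E_{ab}=-\eta v_a\bar v_b$ ($a\neq b$), where $v\perp s$ and $|v_a|=1$. This input is $\eta$-close to feasible but has $\lambda_{\min}=-(m-1)\eta$. For the second, set all off-diagonal entries equal to $1+\delta$; then the distance equals $|\lambda_{\min}|=\delta$.

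So a threshold on $\lambda_{\min}$ has an intrinsic gap factor of order $m$, and it cannot in general separate your $\eta$-versus-$2\eta$ promise classes. Your nearest-matrix SDP does separate them, so make that the actual decision procedure. The paper's one-line appeal to eigenvalue estimation leaves this same point implicit.
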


\begin{proof}
The Schur multiplier is completely positive if and only if $G$ is positive semidefinite and $G_{aa}=1$.  Eigenvalue estimation for an explicitly provided $m\times m$ matrix can be performed in polynomial time with respect to $m$ and the input precision.  For $m=3$, expanding the determinant of the Hermitian Gram matrix yields Eq.~\eqref{eq:asrd_three_branch_gram_bound}.  The holonomy identities in Eq.~\eqref{eq:asrd_holonomy_verifier} require only complex multiplication, conjugation, and comparison within a specified tolerance.  They thus define efficient classical verifiers for the finite branch-channel geometry.
\end{proof}

\begin{theorem}[No intrinsic hardness without an access restriction]\label{thm:no_hardness_asrd}
The ASRD problem admits no access-model-independent hardness statement.  Under coherent polynomial circuit access, it lies in $\mathsf{BQP}$ by virtue of Theorem~\ref{thm:bqp_upper_asrd}.  Consequently, any lower-bound claim necessitates a restricted decoder class or a computationally hard exact-completion map.
\end{theorem}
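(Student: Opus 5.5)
The plan is to unpack the statement into two separate claims: a lower-bound claim and an upper-bound claim. The lower-bound claim says no hardness result for ASRD can hold uniformly over all access models. The upper-bound claim says that, under coherent circuit access, the problem is efficiently solvable. I will argue the second claim first and then derive the first from it by contraposition.

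\textbf{Upper bound.} For the upper bound I would simply invoke Theorem~\ref{thm:bqp_upper_asrd}. It shows that the pair problem lies in $\mathsf{BQP}$ when polynomial-size controlled $U_\Omega$, $U_\Omega^\dagger$ are available. The three Hadamard-test estimates then give the triangle problem in $\mathsf{BQP}$ as well. One point needs care: the additive-error propagation for the product $G_{xy}G_{yz}G_{zx}$. Since $|G|\le1$, I would bound
\[
|\widehat{\mathcal B}-\mathcal B|\le 3\epsilon+O(\epsilon^2)
\]
by telescoping, and apply a union bound over the three failure events. Rescaling $\epsilon\to\epsilon/4$ and $\delta\to\delta/3$ then keeps the cost polynomial.

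\textbf{Lower bound.} Next I would make precise what an ``access-model-independent hardness statement'' means. Concretely, it would be a claim of the form: for every access model $\mathcal M$ to $\Omega_n^{\rm exact}$, solving pair or triangle ASRD to accuracy $(\epsilon,\delta)$ requires superpolynomial resources, or is hard for some class not contained in $\mathsf{BQP}$. The argument is then a single counterexample. Coherent polynomial circuit access is itself an admissible access model, and under it the task is in $\mathsf{BQP}$. So any universal lower bound exceeding $\mathsf{BQP}$ fails, and a bound that is only $\mathsf{BQP}$-hardness is not a hardness statement in the intended sense.

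\textbf{The gap and its resolution.} The main obstacle is that Theorem~\ref{thm:bqp_upper_asrd} carries a hypothesis: the exact encoding must admit a polynomial-size coherent circuit. An exact completion need not satisfy this. I would handle this by treating the encoding itself as part of the access specification. Either the decoder is restricted (for example to copy access, where Theorem~\ref{thm:copy_upper_asrd} already shows the phase is unrecoverable), or the map $\Omega_n^{\rm exact}$ is assumed computationally hard to implement. This dichotomy is exactly the ``consequently'' clause of the statement.

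\textbf{Conclusion.} Hardness can therefore only come from one of two sources: a restriction on the decoder class $\mathfrak D_n$, or the complexity of the completion map. It cannot come from the target invariants $G^H_{xy}$ or $\mathcal B^H_{xyz}$ themselves. By Theorem~\ref{thm:gram_cert_asrd}, these are even classically verifiable in polynomial time once they are given.
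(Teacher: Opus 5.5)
Your proposal is correct and follows the paper's argument. A lower bound valid for every access model would in particular have to hold under coherent polynomial circuit access, where Theorem~\ref{thm:bqp_upper_asrd} already gives a $\mathsf{BQP}$ algorithm; so any hardness must come from restricting the decoder or measurement model, or from the representation of $\Omega_n^{\rm exact}$. Your $3\epsilon+O(\epsilon^2)$ error propagation and your remark on the polynomial-size-circuit hypothesis only make explicit what the paper leaves implicit. The one loose spot is your closing appeal to Theorem~\ref{thm:gram_cert_asrd}: efficiently verifying a given Gram table says nothing about the cost of computing its entries. That remark is not needed for the theorem, however.
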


\begin{proof}
A lower bound applicable to all access models would necessarily hold for coherent polynomial circuit access.  However, Theorem~\ref{thm:bqp_upper_asrd} provides a $\mathsf{BQP}$ algorithm under that specific access model.  Therefore,  a valid hardness claim must restrict the decoder resources, the measurement model, the representation of $\Omega_n^{\rm exact}$, or the promise family of black-hole encodings.
\end{proof}

\begin{theorem}[Conditional Harlow--Hayden~\cite{HarlowHayden} reduction]\label{thm:hh_conditional_asrd}
Let $\mathfrak D_n$ be a polynomial-time decoder class.  Suppose that a family of exact asymptotic encodings $\Omega_n^{\rm exact}$ contains a decoding subproblem $\mathsf{HH}_n$ in the following sense: from any instance of $\mathsf{HH}_n$,  one can construct, in polynomial time, branch labels $x,y,z$ and an exact output of $\Omega_n^{\rm exact}$ such that estimating $G^H_{xy}$ or $\mathcal B^H_{xyz}$ to inverse-polynomial precision solves the instance.  Then the ASRD problem is at least as hard as $\mathsf{HH}_n$ under polynomial-time reductions within the decoder model $\mathfrak D_n$.
\end{theorem}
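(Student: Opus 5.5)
The proof is a standard many-one reduction argument composed with the hypothesized embedding of $\mathsf{HH}_n$ into the encodings $\Omega_n^{\rm exact}$. Take an arbitrary instance $I$ of $\mathsf{HH}_n$. The hypothesis supplies a polynomial-time map $I\mapsto(x,y,z,\omega_I)$, where $\omega_I$ is an exact output of $\Omega_n^{\rm exact}$. It also supplies the property that any estimate of $G^H_{xy}$ or $\mathcal B^H_{xyz}$ to inverse-polynomial precision $\epsilon(n)=1/\mathrm{poly}(n)$ determines the answer to $I$. I will make the decision rule explicit: there is a polynomial-time classical post-processing map $\pi_I$ with $\pi_I(\widehat G)=\mathrm{ans}(I)$ whenever $|\widehat G-G^H_{xy}|\le\epsilon(n)$, and similarly for $\widehat{\mathcal B}$. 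If the hypothesis only gives a promise gap, I will turn the phrase "solves the instance" into a threshold test on the estimate.

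The construction goes as follows. Suppose $\mathcal A\in\mathfrak D_n$ solves Pair ASRD (respectively Triangle ASRD) with parameters $(\epsilon,\delta)$. Define the decoder for $\mathsf{HH}_n$ as: run the polynomial-time instance map, feed $(x,y,\epsilon(n),\delta,\{J_x\})$ and access to $\omega_I$ into $\mathcal A$, then apply $\pi_I$. Its failure probability is at most $\delta$, by the defining guarantee of the ASRD problem in Problem (Pair ASRD) / Problem (Triangle ASRD). Choosing $\delta\le 1/3$ gives bounded error, and standard median amplification brings the error down to any inverse-polynomial level with $O(\log\delta^{-1})$ repetitions.

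Counting resources, the whole procedure is a composition of polynomially many steps: the instance map, $\mathcal A$, and $\pi_I$. This gives the polynomial-time reduction $\mathsf{HH}_n\le\mathrm{ASRD}$. For the triangle variant, I will also remark that the estimate $\widehat{\mathcal B}$ can be obtained from three pair estimates, as in the proof of Theorem~\ref{thm:bqp_upper_asrd}. That argument multiplies the pair precisions by a constant factor, which keeps the required precision inverse-polynomial. So either target suffices.

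The main obstacle is showing that the composed decoder still belongs to $\mathfrak D_n$. The statement only claims hardness within the decoder model, so I will need $\mathfrak D_n$ to be closed under polynomial-time classical pre- and post-processing and under polynomially many repetitions. I will state this closure as the standing meaning of "polynomial-time decoder class." Under that closure, the reduction holds within $\mathfrak D_n$. The remaining subtlety is the access model. The instance map must produce $\omega_I$ in whatever access form $\mathfrak D_n$ assumes, such as copies, as in Theorem~\ref{thm:copy_upper_asrd}, or coherent circuits. I will handle this by reading the hypothesis "construct an exact output" as producing that access in polynomial time. I will also note, consistent with Theorem~\ref{thm:no_hardness_asrd}, that the conclusion is only as strong as the access restriction built into $\mathfrak D_n$.
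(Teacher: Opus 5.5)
Your proposal is correct and takes the same route as the paper: compose the polynomial-time map from an $\mathsf{HH}_n$ instance to an ASRD instance with the assumed ASRD decoder, so that the returned estimate of $G^H_{xy}$ or $\mathcal B^H_{xyz}$ solves the original instance. Your extra bookkeeping (the explicit post-processing $\pi_I$, the failure probability, closure of $\mathfrak D_n$ under classical pre- and post-processing, and the access model) spells out what the paper leaves implicit; note only that once you invoke the decoder repeatedly, or through three pair calls, you have an oracle reduction rather than a strictly many-one one, which is exactly the distinction the paper's one-line argument draws.
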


\begin{proof}
By composing the polynomial-time map from the $\mathsf{HH}_n$ instance to the ASRD instance with the assumed ASRD decoder, the resulting output estimate solves the original $\mathsf{HH}_n$ instance by construction.  This constitutes a standard many-one or oracle reduction, depending on whether the decoder is invoked once or adaptively.
\end{proof}

Theorem~\ref{thm:hh_conditional_asrd} formalizes the precise role of the Harlow--Hayden obstruction~\cite{HarlowHayden}.  It does not assert that the semiclassical kernel calculation is computationally hard.  When the branch currents and horizon kernel are explicitly known, one can directly evaluate
\begin{equation}
       \Gamma^H_{xy}=(J_x-J_y)N_H(J_x-J_y),
       \qquad
       G^H_{xy}=e^{-\Gamma^H_{xy}+i\Phi^H_{xy}} .
\end{equation}
Computational complexity arises only after exact completion replaces the missing horizon subsystem with a scrambled asymptotic representation and restricts the decoder. Exponential Gram tomography in isolation yields no cryptographic statement; it merely reflects the dimension of the output table, rather than establishing a computational lower bound. The relevant quantum computational interface comprises access models, $\mathsf{BQP}$ upper bounds, polynomial Gram/holonomy verifiers, and conditional reductions from the chosen exact-completion decoder.

\section{Consistency checks and boundaries of leading-soft channel}
\label{app:scope_checks}

The quantum channel formulation extends beyond a mere two-branch overlap. The pure coherent two-branch overlap constitutes a single matrix element of a completely positive Schur channel. The complete channel formalism imposes Gram positivity, structures detector instruments, separates resolved from unresolved exterior records, and constrains recovery maps via the complementary horizon output. These structural features possess no analogue in an isolated visibility parameter.

The present analysis characterizes softness as dependent on the chosen algebra and flow. A mode is classified as ``soft'' only subsequent to the specification of the relevant time generator and detector response. The invariant object is the reduced channel acting on the branch code, coupled with the specified exterior algebra. The soft/hard decomposition thus constitutes a spectral representation of this channel, rather than an invariant classification of individual quanta.

Hard absorption is incorporated within the same open-system framework and exhibits distinct scaling behavior relative to the leading soft horizon record. This work isolates protocols wherein finite-frequency radiation is driven by nonadiabatic switching, whereas the low-frequency horizon component scales with the dwell time. The two-parameter scaling test rigorously separates these contributions.

The local two-point formulation eliminates the horizon as an explicit computational boundary while preserving its role as a complementary algebra. Both the local kernel and the horizon-overlap calculation evaluate the identical positive quadratic form. The open quantum system (OQS) formulation then addresses which component of this form corresponds to records inaccessible to exterior control.

Finite-time decoherence is rendered unambiguous upon the specification of the operational task. A finite horizon cut, a recovery map, and a detector instrument collectively define a quantum channel. Consequently, the final reduced density matrix and the maximum recoverable visibility become well-defined channel quantities.

Ordinary matter can reproduce the electromagnetic effect provided its low-frequency response matches the horizon kernel evaluated on the branch-current family under consideration. The universal characteristic of the black hole is therefore a specific spectral kernel, rather than the categorical assertion that any environment possessing internal degrees of freedom is equivalent to a horizon.

The assumption of an initial product state in the soft-QED derivation factorizes the dressed charged sector from the independent radiation sector. The spatial superposition experiment then prepares the specific charged-sector density matrix upon which the channel acts. The derivation of the channel and the interferometric probing of its off-diagonal entries thus constitute compatible stages of a unified construction

The cyclic product $\mathcal B_{123}$ constitutes a Bargmann invariant of the environmental Gram matrix.  It quantifies rephasing-invariant channel holonomy and the projective phase geometry.  A nonzero cyclic phase represents a genuine multi-branch observable, as the individual branch phases form a coboundary and cancel over the closed triangle.  Questions of associativity necessitates an actual associator or three-cocycle for the relevant soft transformations; the present calculation establishes the Bargmann holonomy and its complete positivity (CP) constraints.

Exact asymptotic completeness pertains the exact quantum-gravitational algebra at $\Iplus$.  The semiclassical channel employed in this work traces over the ordinary horizon radiative output within the Feynman--Vernon calculation.  An exact reconstruction map derived from exterior soft data would define a separate distinct decoding problem; it would not, however, alter the semiclassical complementary-channel statement.

This work adopts the leading-soft OQS logic from flat-space QED, rather than its flat-space kinematics.  It utilizes fixed-history normalization, off-diagonal environmental distinguishability, kernel-controlled decoherence, and the noncommutativity of infrared and long-time limits.  Upon transitioning to a black hole exterior, It discards the finite-cavity spectral model, the flat momentum basis, and the electron-unparticle interpretation.  The horizon problem replaces these elements with branch currents, a detector algebra, and a complementary horizon output.

\section{Coherent-state overlap}

For bosonic coherent states $|\alpha\rangle$ and $|\beta\rangle$ in a one-particle Hilbert space,
\begin{equation}
       \langle\beta|\alpha\rangle
       =
       \exp\left[-\half\|\alpha\|^2-\half\|\beta\|^2+\langle\beta,\alpha\rangle\right]
       =
       \exp\left[-\half\|\alpha-\beta\|^2+
       \ii\operatorname{Im}\langle\beta,\alpha\rangle\right].
\end{equation}
Consequently, the decoherence exponent corresponds to the norm of the difference of the branch displacements.  In QED, this one-particle norm is evaluated exclusively on transverse (radiative) physical data; the constrained Coulomb data are incorporated into the dressed charged branch rather than as an independent bath.

\section{Schur-channel Kraus representation}

Given a positive semidefinite matrix $G$ with a unit diagonal, let $G=V\Lambda V^\dagger$ denote its spectral decomposition.  The Schur multiplier $\rho\mapsto G\circ\rho$ admits the Kraus operators
\begin{equation}
       K_r=\sqrt{\lambda_r}\sum_a V_{ar}|a\rangle\langle a|.
\end{equation}
The action of the channel is then given by
\begin{equation}
       \sum_rK_r\rho K_r^\dagger
       =
       \sum_{a,b}\left(\sum_r\lambda_rV_{ar}V^*_{br}\right)\rho_{ab}|a\rangle\langle b|
       =
       \sum_{a,b}G_{ab}\rho_{ab}|a\rangle\langle b|.
\end{equation}
Trace preservation follows directly from the condition $G_{aa}=1$.

\section{Relation to infrared-finite QED}

The standard Bloch--Nordsieck result is frequently presented~\cite{BlochNordsieck,YFS,WeinbergSoft,KulishFaddeev} as the cancellation between virtual and unresolved real soft factors in transition probabilities. Within the OQS framework, this is equivalent to the statement that a hard charged observable is not governed by an exclusive soft final state. Rather, it is governed by a reduced channel obtained after tracing over the detector-unresolved soft modes. This reduced channel remains finite once the detector resolution is properly incorporated~\cite{Rey2026a,Rey2026b}.

The leading horizon-restricted formulation differs in that the exterior observer's inclusive trace does not extend toe $\Hplus$.  If $B_{ab}^{\Iplus}(\Lambda)$ denotes the unresolved exterior soft exponent and $B_{ab}^{\hor}$ denotes the horizon exponent, the schematic form is 
\begin{equation}
       \rho_{ab}^{\out}
       =
       \rho_{ab}^{\inm}
       \exp[-B_{ab}^{\Iplus}(\Lambda)]
       \exp[-B_{ab}^{\hor}]
       \exp[\ii\Phi_{ab}] .
\end{equation}
Refinement to the exterior detector can reduce or condition on $B_{ab}^{\Iplus}$; however, they cannot eliminate  $B_{ab}^{\hor}$.

\end{document}